\definecolor{links}{RGB}{11, 85, 255}
\definecolor{cites}{RGB}{0, 200, 0}
\definecolor{urls}{RGB}{255, 116, 0}
\newcommand*{\algotitle}[2]{%
	\stepcounter{algocf}%
	\hypertarget{algocf.title.\theHalgocf}{}%
	\NR@gettitle{#1}%
	\label{#2}%
	\addtocounter{algocf}{-1}%
}
\newtheorem{theorem}{Theorem}
\theoremstyle{definition}
\author{
	Philip Lazos\footnote{Sapienza University of Rome: 
		\url{plazos@gmail.com}}
	\and
	Francisco J. Marmolejo-Coss\'io\footnote{University of Oxford, IOHK: 
		\url{francisco.marmolejo@cs.ox.ac.uk}}
	\and
	Xinyu Zhou\footnote{University of Maryland: 
		\url{xyzhou@terpmail.umd.edu} and \url{jkatz2@gmail.com}}
	\and
	Jonathan Katz{\textsuperscript{\textsection}}
}
\title{RPPLNS: Pay-per-last-N-shares with a Randomised Twist\thanks{Philip 
Lazos is supported by the ERC Advanced 
Grant 788893 AMDROMA ``Algorithmic and Mechanism Design Research in 
Online Markets'' and the MIUR PRIN project ALGADIMAR ``Algorithms, Games, 
and Digital Markets.''}}
\date{\today}
\begin{document}
\maketitle

\begin{abstract}
``Pay-per-last-$N$-shares'' (PPLNS) is one of the most common payout 
strategies used by mining pools in Proof-of-Work (PoW) cryptocurrencies. As 
with any payment scheme, it is imperative to study issues of incentive 
compatibility of miners within the pool. For PPLNS this question has only been 
partially answered; we know that reasonably-sized miners within a PPLNS pool 
prefer following the pool protocol over employing {\em specific} deviations. In 
this paper, we present a novel modification to PPLNS where we randomise the 
protocol in a natural way. We call our protocol ``Randomised 
pay-per-last-$N$-shares'' (RPPLNS), and note that the randomised structure of 
the protocol greatly simplifies the study of its incentive compatibility. We show 
that RPPLNS maintains the strengths of PPLNS (i.e., fairness, variance 
reduction, and resistance to pool hopping), while also being robust against a 
richer class of strategic mining than what has been shown for PPLNS.  
\end{abstract}

\section{Introduction}

In Bitcoin, miners maintain a ledger of transactions and are rewarded for their 
efforts by the underlying protocol. Successfully appending a block to the ledger 
is computationally difficult; it can take common computing equipment years to 
find a single block in expectation. In response to this variability, miners often 
pool their resources so that rather than rarely earning large rewards they earn 
smaller rewards at a more consistent rate. 

In more detail, we can think of the Bitcoin ecosystem as consisting of $n$ 
strategic miners, $m_1$,...,$m_n$, with hash powers $\alpha_1,...,\alpha_n > 0$, 
such that $\sum_i \alpha_i = 1$. Intuitively, $\alpha_i$ represents the proportional 
computational power that a miner has. Assuming that block mining happens in 
discrete time-steps, the probability that $m_i$ mines the next block is equal to 
$\alpha_i$. The dilemma of the previous section corresponds to a single miner 
having a small $\alpha_i$ value, and hence only mining a block in $1/\alpha_i$ 
time steps in expectation. On the other hand, a set of $S$ miners could combine 
their computational power and have a $\sum_{m_i \in S} \alpha_i$ chance of 
mining the next block, sharing the rewards if they manage to do so. 

In order to share rewards, the pool must have a way of identifying the 
computational contribution of each miner. This is done by accepting partial 
proofs of work, which are ``near-misses'' to Bitcoin's desired hash rate and 
whose number is directly proportional to the computational effort spent on 
extending the blockchain. The pool operator collects these ``near-misses'', 
called shares, reported by every miner in the pool, and uses them to distribute 
payments once an actual block is found. Ideally, participating in a pool should 
have (at least) the following guarantees:
\begin{enumerate}
	\item Fairness: miners earn the same block reward in expectation as mining 
	alone.
	\item Variance reduction: for any fixed amount of time, miners have lower 
	variance in block reward than when mining alone.
	\item Robustness against pool hopping: at no point of time is there is a 
	benefit in leaving the pool for another one or leaving the pool to mine 
	individually.
	\item Incentive Compatibility: to maximize their reward, each participating 
	miner should always expend maximum effort and report shares/blocks 
	immediately as they are generated.
\end{enumerate}

One of the most popular pool mining protocols which (partially) satisfies these 
properties is ``Pay-per-last-$N$-shares'' (PPLNS). Miners report shares to a 
pool operator which maintains a queue of the $N$ most recent shares reported 
to it, and if a block is found and reported by the pool, the owners of these $N$ 
shares are paid proportionally ($1/N$ times the value of a block for each such 
share). The structure of PPLNS is such that it satisfies properties 1-3 above 
\cite{rosenfeld2011analysis}. With respect to property 4, 
\cite{zolotavkin2017incentive, schrijvers2016incentive} demonstrate that miners in 
a PPLNS are incentivised to act honestly if they are only permitted specific 
deviations, hence PPLNS only partially satisfies property 4. 

\paragraph{Our Contributions.}
In this paper, we present a novel modification to PPLNS where we randomise 
the protocol in a natural way. We call our protocol ``Randomised 
pay-per-last-$N$-shares'' (RPPLNS), and note that the randomised structure of 
the protocol greatly simplifies the study of its incentive compatibility. We show 
that RPPLNS maintains the strengths of PPLNS (i.e., fairness, variance 
reduction, and resistance to pool hopping), while also being robust against a 
richer class of strategic mining than what has been shown for PPLNS. In 
particular, \citet{schrijvers2016incentive} showed that PPLNS is robust when 
strategies are limited to either mining honestly \emph{or} secretly hoarding all 
accumulated shares and suddenly releasing them if a block is found. For 
RPPLNS  we provide experimental evidence that suggests that mining honestly 
maximizes rewards among any strategy involving hoarding and releasing shares.

\subsection{Structure of the Paper}
The structure of the paper is as follows: in Section \ref{sec:relatedwork} we 
present related literature, and in Section \ref{sec:state-machine-pools} we define 
a formal framework for pool mining protocols that encompasses PPLNS and 
RPPLNS in a language that is amenable to a rigorous mathematical analysis of 
their properties. Section \ref{sec:RPPLNS} holds the key results of our paper 
where we describe RPPLNS and prove its desirable properties. In Section 
\ref{sec:strategic-hoarding} we give a theoretical justification for why specific 
(improbable) initial pool states give rise to strategic mining in both PPLNS and 
RPPLNS. Finally, in Section \ref{sec:future-work} we describe possible further 
areas of research. 

\subsection{Related Work}\label{sec:relatedwork}
Our work follows a line of research in the game theoretic aspects of 
cryptocurrencies. Strategic mining has been studied from the inception of 
Bitcoin when Nakamoto suggested the robustness of Bitcoin to double-spend 
attacks \cite{nakamoto2008bitcoin}. Subsequently, this area of research has 
taken multiple directions. In \cite{eyal2014majority}, the authors demonstrate that 
honest mining is not robust to strategic mining in terms of block reward, even 
when a miner has less than a majority computational stake in the Bitcoin 
ecosystem, by exhibiting a specific mining strategy, selfish mining. This work is 
refined in \cite{sapirshtein2016optimal}, \cite{nayak2016stubborn} and 
\cite{kiayias2016blockchain} by generalising selfish mining, pairing selfish mining 
with network-level attacks and proving limited incentive compatibility of honest 
mining if miners have low enough hash power. Further incentives at the 
individual miner level have also been studied in \cite{carlsten2016instability} 
where transaction rewards alone are demonstrated to be unstable for incentive 
compatibility of mining. At the pool level, the author of \cite{eyal2015miner} 
demonstrates an infiltration attack pools can wage against each other that leads 
to an iterated prisoner's dilemma between pool operators (dubbed the miner's 
dilemma). This has been further refined in \cite{KwonKSVK17} where a judicious 
refinement of an infiltration attack can avoid the miner's dilemma, so that larger 
pools benefit from attacking at a cost to smaller pools. 

This aforementioned direction of research is mainly involved with strategic 
mining with respect to block rewards, but our work is dedicated to strategic 
mining within pools with specified payment protocols (see 
\cite{rosenfeld2011analysis} for an extensive survey of pool protocols). In 
\cite{schrijvers2016incentive}, the authors study incentive compatibility in pool 
protocols that decide how to make payments on the basis of the quantity of 
shares each miner reported, irrespective of the order in which they are received. 
In this setting, they give necessary conditions for when behaving honestly 
dominates strategic hoarding of shares and blocks for pool miners. Not only do 
they show that certain popular strategies are not incentive compatible or 
budget-balanced (proportional payment, fixed payment per share), but they also 
provide an incentive-compatible pool protocol that is budget-balanced in this 
setting where the order of shares are irrelevant to the pool operator. A PPLNS 
pool fundamentally needs the order in which shares are submitted to decide how 
much to pay miners, hence the previous results do not hold for PPLNS. For this 
reason, the authors in \cite{schrijvers2016incentive} also partially answer the 
question of whether PPLNS is an incentive compatible mining pool scheme, by 
showing that at small enough hash powers and large enough share difficulty 
levels, honest mining is robust against the specific strategic deviation in which 
miners keep a single share private with the hopes of finding a subsequent block 
to ensure said private share is paid, by publishing it immediately before 
revealing a block. Subsequently, the authors of \cite{zolotavkin2017incentive} 
study a different class of strategic deviations in PPLNS where a miner hoards a 
certain number $x \in \mathbb{N}$ of shares. Subsequent shares are published 
immediately, and whenever a block is found, those $x$ shares are published 
immediately before publishing the block. Their analysis makes the assumption 
that each strategic miner reaches their threshold $x$, and show when being 
honest outperforms being strategic in this setting. Finally, the authors of 
\cite{qin2019novel} exhibit specific reporting strategies that can be beneficial to 
strategic miners at high enough hash rates.

\section{PPLNS and RPPLNS}
\label{sec:state-machine-pools}

In this section we give a framework for studying miner incentives in PPLNS and 
RPPLNS mining pools. To do so, we begin with a high-level overview with key 
aspects of pool protocols.

\subsection{Pool Basics}
In general, the pool operator prepares the `template' of the next block in 
advance to send to its pool miners: this contains all the information about where 
the next block should be added and what it should contain. Then, pool miners 
try to fill in different nonce values at an attempt to obtain a valid block (i.e., one 
with a low enough hash value) whose rewards will be claimed by the pool and 
redistributed to its miners. When mining solo (rather than as a part of a pool), 
miners obtain an expected reward proportional to their hash power. In order to 
fairly redistribute rewards, a pool needs a proxy to measure the computational 
resources a miner is contributing to the pool's operations. This is done by 
allowing pool miners to send the pool operator near-misses (also called 
\emph{shares} in this context) to the Bitcoin difficulty threshold. In our full 
exposition of mining pool rules, we assume the following: that there is a mining 
pool composed of $k$ miners, which will call pool miners, and that all miners 
outside the pool (non-pool miners) are honest, and hence can be modelled as a 
single honest miner. The reason we model the other miners as a single honest 
entity is due to the fact that we focus on strategic mining deviations {\em within} 
the pool protocol only. Pool miners are denoted by $m_1,...,m_k$, and the 
non-pool miner by $m_0$. As before, the $i$-th miner has hash power $\alpha_i$.

\newcommand{\honest}{\ensuremath{m_2}}
\newcommand{\strategic}{\ensuremath{m_1}}
\newcommand{\strategicR}{\ensuremath{R_1}}
\newcommand{\nonpool}{\ensuremath{m_0}}

\paragraph{Valid Share / Block Generation.} The first relevant parameter that 
the pool must set is the {\em relative difficulty} of blocks to shares, which can be 
parametrised by a non-negative number $D \geq 1$. We model share and block 
generation in discrete time-steps. Each turn a share is created, and it is 
attributed to miner $m_i$ with probability $\alpha_i$. Furthermore, each share 
has a $1/D$ probability of being a block. A block can be thought of as a share 
that gives the pool a reward to be distributed amongst its miners.

\paragraph{Messages to the Pool.} At any given time-step, the pool receives 
messages from miners which can be of the form ${\sf share}_i$ or ${\sf block}_i$ 
if $m_i$ reports a share or block respectively to the pool\footnote{$\strategic$ 
only sends block messages to the pool due to the fact that the concept of a 
share only makes sense for pool miners. On the other hand, a block message 
from $\strategic$ corresponds to the pool becoming aware of said block on the 
global blockchain itself.}. We denote the set of messages the pool operator can 
receive by $M$. 

\paragraph{Honest and Strategic Pool Mining.} For all pool mining schemes, 
we say that a pool miner is honest if they mine upon the block given by the pool 
operator and if they report shares/blocks immediately. For independent miners, 
we say they are honest if they publish blocks they find immediately (The concept 
of a share is irrelevant outside the pool). Consequently, strategic miners have 
the possibility of arbitrarily hoarding shares and blocks (valid messages of the 
form ${\sf share}_i$ or ${\sf block}_i$) to be published at a time after they are 
found. It is crucial to note that the mechanics of pool mining are such that these 
deviations do not come without a potential cost. Pool miners are all given a 
block template to work on, and shares are near-misses to the hash of a block 
within this template. If a miner is hoarding shares/blocks and another honest 
pool miner publishes a block to the pool, these shares/blocks are no longer 
valid and the pool will not pay the owner of them. The same scenario happens if 
an honest miner outside the pool mines a block. Strategic miners must therefore 
balance this risk while hoarding. Furthermore, if hoarding is beneficial for some 
miners, the pool's overall quality degrades in two ways: for them to increase 
their payoff some other miners will have to earn less and there is potential for 
decreased computational efficiency as some work might be performed twice.

\paragraph{Payments.} Pool operators maintain a history of messages 
received and as a function of this history decide how to redistribute block 
rewards to pool miners. Notice that since the pool sets the block template for 
pool miners, only the pool operator can receive and redistribute funds from a 
valid block found by the pool. Setting a pool payment scheme determines miner 
incentives, hence it is crucial to analyse whether mining honestly is beneficial to 
pool miners. In general, pools are not obliged to redistribute the entirety of their 
block reward, but in this paper we focus on pools which do. We call these pools 
{\em budget balanced}.

\subsection{PPLNS}

At a high level, PPLNS pools are simple to understand. A pool operator sets a 
relative block to share difficulty, $D$, and a queue size parameter, $N \in 
\mathbb{N}$. Pool miners report shares and blocks to the pool operator, which 
in turn maintains a queue of length $N$ consisting of the the most recent shares 
reported. If ever a block is found, the pool operator pays the shares in the queue 
proportionally. This process can be visualised in Figure \ref{fig:PPLNS}. 

Since we are studying strategic mining deviations, we have to precisely define 
the pool protocol, as this will govern whether miners are incentivised to be 
honest in the first place. We do this by modelling PPLNS as a deterministic 
state machine\footnote{Many common pool protocols can be cast in this 
framework. This is explored further in \cite{marmolejo2019equilibrium}}. At any 
given moment, the pool operator maintains a state $s \in S = (\{*\} \cup [n])^N$. 
$S$ represents the state space of the PPLNS queue (all possible queue 
compositions), where ``$*$'' represents an empty value while the queue is filled 
in the first $N$ messages the pool operator receives. Furthermore, we let $s_0 = 
(*)_{i=1}^N$ be a distinguished initial state of the pool protocol that corresponds 
to an empty queue. 

As with any state machine, we need to define a transition function between 
states. We recall that $M$ denotes the set of messages that a pool operator can 
receive, hence we can specify a transition function $T: S \times M \rightarrow 
S$. For a given message from the $i$-th pool miner (meaning $i \neq 0$), $x = 
{\sf share}_i$ or $x = {\sf block}_i$, $T(s,x) = i:s_{<n}$, which is the 
concatenation of $i$ with the first $n-1$ elements of $s$ (The queue is filled from 
the left to the right). As for non-pool miner messages, $T(s,{\sf block}_0) = s$.

Finally, we need to specify a payment protocol in the scenario where a pool 
miner finds a block. This means that the pool receives a message ${\sf block}_i$ 
for $i \neq 0$, which in turn implies that if the pool is in state $s$, it transitions to 
state $s' = T(s,{\sf block}_i)$. State $s'$ contains the identities of miners with the 
most recent $N$ shares, hence we let $P:S\rightarrow \mathbb{R}^k$ be a 
payment function such that $P(s')_i = \frac{|\{j \ \mid \ s'_{j} = i\}| }{|\{j \ \mid \ 
s'_{j} \neq * \}|} \geq 0$ is the payment $m_i$ receives from the found block.

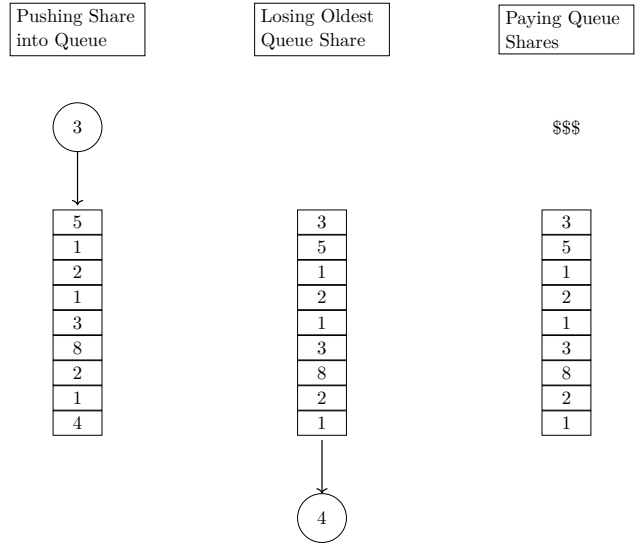
\begin{figure*}
	\centering
	\scalebox{0.65}{
		\begin{tikzpicture}[draw, minimum width=1cm, minimum height=0.5cm]
		
		\node[draw,text width=2.5cm] at (-5,6) {Pushing Share into Queue};
		
		\node [draw,circle] at (-5,4) {3};
		
		\matrix (queue) at (-5,0) [matrix of nodes, nodes={draw, nodes={draw}}, 
		nodes in empty cells]
		{
			5 \\ 1 \\ 2 \\ 1 \\ 3 \\ 8 \\ 2 \\ 1 \\ 4 \\
		};
		
		\draw[->, line width=0.25mm] (-5,3.5) -- (-5,2.4);    
		
		\node[draw,text width=2.5cm] at (0,6) {Losing Oldest Queue Share};
		
		\matrix (queue) at (0,0) [matrix of nodes, nodes={draw, nodes={draw}}, 
		nodes in empty cells]
		{
			3 \\ 5 \\ 1 \\ 2 \\ 1 \\ 3 \\ 8 \\ 2 \\ 1 \\
		};
		\draw[->, line width=0.25mm] (0,-2.4) -- (0,-3.5);    
		\node [draw,circle] at (0,-4) {4};
		
		\node[draw,text width=2.5cm] at (5,6) {Paying Queue Shares};
		\node at (5,4) {\$\$\$};
		
		\matrix (queue) at (5,0) [matrix of nodes, nodes={draw, nodes={draw}}, 
		nodes in empty cells]
		{
			3 \\ 5 \\ 1 \\ 2 \\ 1 \\ 3 \\ 8 \\ 2 \\ 1 \\
		};

		\end{tikzpicture}
	}
	
	\caption{A single transition of state for PPLNS. Share 3 is pushed into the 
	queue, causing share 4 at the end of the queue to exit. Upon this transition, 
	all owners of shares in the queue are paid $1/N$ per share in the queue. }
	\label{fig:PPLNS}
\end{figure*}

\subsection{RPPLNS}

RPPLNS is also simple to understand at a high level. The protocol is similar to 
PPLNS with the added difference that shares are no longer maintained in a 
queue but rather a bag that does away with the sequential nature of when which 
share arrived. The relevant parameters for a pool are still the relative difficulty of 
blocks to shares, $D$, and, $N \in \mathbb{N}$, the size of the bag of shares 
maintained by the pool operator. Pool miners report shares and blocks to the 
pool operator. If the bag of shares maintained by the operator is not full (i.e. 
there are less than $N$ shares in the bag), then the reported share is 
automatically added to the bag. On the other hand, when the bag is full, the 
reported share displaces a random share in the bag maintained by the operator. 
If a block is found, first it is added to the bag as a share via the aforementioned 
method, and subsequently the pool operator pays the shares in the bag with a 
proportional value of the block reward. This process is visualised in Figure 
\ref{fig:RPPLNS}.

Once more, we must be rigorous to study incentive compatibility of pool miners. 
With this goal in mind, we model RPPLNS by using a randomised transition 
state machine. At any moment of time, the pool operator maintains a state $s \in 
S = \{s \in [N]^k \mid \sum s_i \leq N \}$. For a given state, $s$, we let $s_i$ 
represent the number of shares that $m_i$ owns in the protocol bag. We also let 
$s_0 = \vec{0} \in [N]^k$ be a distinguished initial state of the pool protocol 
which corresponds to an empty bag.

We now define the randomised transitions that an RPPLNS pool takes upon 
receiving messages from miners. In order to do so, we let $\Delta(S)$ denote the 
set of all probability distributions over $S$. We let $T:S \times M \rightarrow 
\Delta(S)$ be the randomised transition function of RPPLNS. This means that if 
the pool is in state $s \in S$ and receives message $x \in M$, then the resulting 
state $s'$ will be distributed according to $T(s,x)$, which we define as follows:
\begin{itemize}
	\item If $x = {\sf block}_0$, $\mathbb{P}_{T(s,x)}( s') = \mathbb{I}(s' = s)$, the 
	indicator function for $s$. In other words, the pool does not change state.
	\item If $x \in \{{\sf share}_i, {\sf block}_i\}$ such that $i \neq 0$, and in 
	addition $\sum s_{i=1}^k < N$, then $\mathbb{P}_{T(s,x)}( s') = \mathbb{I}(s' = 
	s + e_i)$. In other words, the pool's bag is not full in $s$, hence it 
	deterministically adds $m_i$'s share to the bag.
	\item If $x \in \{{\sf share}_i, {\sf block}_i\}$ such that $i \neq 0$, and in 
	addition $\sum s_{i=1}^k = N$, then $\mathbb{P}_{T(s,x)} \left(s - e_j + e_i 
	\right) = \frac{s_j}{N}$. In other words, the pool operator picks a random share 
	from $s$ to kick out to make way for $m_i$'s newly reported share.
\end{itemize}

Finally, we define the payment that occurs if a block is found. Suppose that 
upon receiving a block, the pool state transitions (randomly) to state $s'$. As 
with PPLNS, we let $P:S\rightarrow \mathbb{R}^k$ be a payment function such 
that $P(s')_i = \frac{s_i}{\sum_{j=1}^k s_j} \geq 0$ is the payment $m_i$ receives 
from the found block.

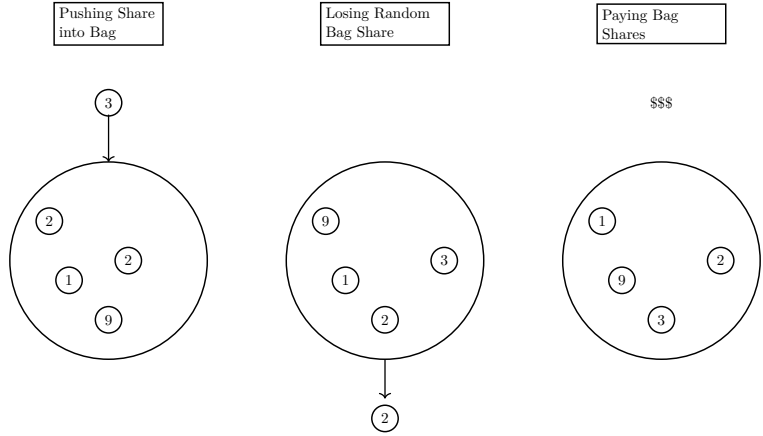
\begin{figure*}
	\centering
	\scalebox{0.7}{
		\begin{tikzpicture}[thick,scale=0.75, every node/.style={transform shape}]
		
		\node[draw,text width=2.5cm] at (-5,2) {Pushing Share into Bag};
		
		\node at (-5,0) [circle, draw] {3};
		\draw (-5,-4) circle [radius=2.5cm];
		\node at (-5,-5.5) [circle, draw] {9};
		\node at (-4.5,-4) [circle, draw] {2};
		\node at (-6.5,-3) [circle, draw] {2};
		\node at (-6,-4.5) [circle, draw] {1};
		\draw[->, line width=0.25mm] (-5,-0.3) -- (-5,-1.5);
		
		\node[draw,text width=3cm] at (2,2) {Losing Random Bag Share};
		
		\draw (2,-4) circle [radius=2.5cm];
		\node at (2,-5.5) [circle, draw] {2};
		\node at (3.5,-4) [circle, draw] {3};
		\node at (0.5,-3) [circle, draw] {9};
		\node at (1,-4.5) [circle, draw] {1};
		\node at (2,-8) [circle, draw] {2};
		\draw[->, line width=0.25mm] (2,-6.5) -- (2,-7.5);   
		
		\node[draw,text width=3cm] at (9,2) {Paying Bag Shares};
		\node at (9,0) {\$\$\$};
		
		\draw (9,-4) circle [radius=2.5cm];
		\node at (9,-5.5) [circle, draw] {3};
		\node at (10.5,-4) [circle, draw] {2};
		\node at (7.5,-3) [circle, draw] {1};
		\node at (8,-4.5) [circle, draw] {9};
		
		\end{tikzpicture}
	}
	
	\caption{A single turn's randomised transition for RPPLNS. The pool receives 
	share 3 and pushes it into the bag by randomly selecting an existing share in 
	the bag to kick out. In this case, share 2 is picked to leave. Upon the 
	randomised push, all owners of shares in the bag are paid $1/N$ per share in 
	the bag.}
	\label{fig:RPPLNS}
\end{figure*}

\section{Properties of RPPLNS}
\label{sec:RPPLNS}

In the analysis of RPPLNS, it suffices to consider a single strategic pool miner, 
$\strategic$ with hash power $\alpha$, a single honest pool miner $\honest$ with 
hash power $\beta$, and a single honest non-pool miner $\nonpool$ with hash 
power $\gamma$. Indeed, $\honest$ and $\nonpool$ could be composed of 
multiple honest miners, but if they are honest, we can model their behaviour as 
that of a single miner of their aggregate hash power. In addition, to model 
revenues, we consider a turn-based process. Every turn, either $\strategic$, 
$\honest$ or $\nonpool$ find a share with probability $\alpha$, $\beta$ and 
$\gamma$ respectively, and each share has a further $\frac{1}{D}$ probability of 
being a full block. We wish to point out that we say that $\nonpool$ finds shares 
in the sense that it computes a block with a hash that is a near-miss to the target 
hash (by a factor of $D$). $\nonpool$ does not actually report this near miss to 
the pool since it is not a part of the pool. However, $\nonpool$ does publish 
blocks immediately to all agents of the Bitcoin ecosystem, so we consider this 
as a message to the pool operator. Finally, since $\honest$ is an honest pool 
miner, whenever they find a share/block they communicate this immediately to 
the RPPLNS pool. 

\subsection{Fairness and Variance Reduction}
\label{sec:fairness-var}

We show that if $\strategic$ is honest, then their expected block reward per turn 
is precisely $\alpha/D$. Since each share has a $\frac{1}{D}$ probability of being 
a block, this coincides with the expected $\alpha$ block reward $\strategic$ 
would get (per block mined by the system) mining solo. In addition, we 
demonstrate that RPPLNS enjoys similar variance reduction in block reward to 
what characterises PPLNS.

\begin{restatable}{theorem}{variance}
	\label{thm:RPPLNS-fair-variance}
	Suppose that $\strategic$ is honest with hash power $\alpha$, then their 
	expected per-turn block reward is $\frac{\alpha}{D}$ in an RPPLNS mining 
	pool. In addition, the variance of their per-turn block reward is $ 
	\frac{1}{D^2}(\alpha - \alpha^2) + \frac{\alpha}{ND}$
\end{restatable}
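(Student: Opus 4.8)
The plan is to work in the stationary regime of the pool's bag and reduce both claims to a single distributional fact: in steady state the number of \strategic's shares in a full bag of size $N$ is distributed as $\mathrm{Bin}(N,\frac{\alpha}{\alpha+\beta})$. First I would observe that \nonpool never alters the state (its blocks satisfy $T(s,\mathsf{block}_0)=s$, and it contributes no shares), so the bag evolves only on turns where \strategic or \honest report, and on such a turn the incoming share belongs to \strategic with probability $\frac{\alpha}{\alpha+\beta}$. I would then note that the ``displace a uniformly random share'' rule is equivalent to overwriting a uniformly chosen one of the $N$ slots, so that each slot independently holds the type of the most recent share written into it. Since distinct slots are overwritten by distinct reporting events with independent owners, the slots are mutually independent in stationarity, each being a \strategic-share with probability $p:=\frac{\alpha}{\alpha+\beta}$; hence the count $s_1$ is $\mathrm{Bin}(N,p)$. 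Equivalently, $s_1$ is a birth--death chain with up-probability $\alpha\frac{N-s_1}{N}$ and down-probability $\beta\frac{s_1}{N}$, whose detailed-balance stationary law is exactly this binomial.

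Second, for the expected reward I would condition on the turn producing a block mined inside the pool, an event of probability $\frac{\alpha+\beta}{D}$ that is independent of the current bag; by a PASTA-type argument the bag just before such a block is stationary. Writing $s'$ for the post-insertion state, the block share is \strategic's with conditional probability $p$ and it overwrites a uniform slot, so removing a uniform slot from $\mathrm{Bin}(N,p)$ (leaving $\mathrm{Bin}(N-1,p)$ on the untouched slots) and re-inserting a $\mathrm{Bernoulli}(p)$ share keeps $s'_1\sim\mathrm{Bin}(N,p)$. Thus $\E{P(s')_1}=\E{s'_1/N}=p$, and the per-turn expectation is $\frac{\alpha+\beta}{D}\cdot p=\frac{\alpha}{D}$, which is the claimed fairness statement.

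Third, for the variance I would compute the second moment of the per-turn reward $X=\mathbb{I}(\text{pool block})\cdot s'_1/N$ directly. Since the block indicator is independent of the bag, $\E{X^2}=\frac{\alpha+\beta}{D}\,\E{(s'_1/N)^2}$, and with $s'_1\sim\mathrm{Bin}(N,p)$ one has $\E{(s'_1/N)^2}=\frac{p(1-p)}{N}+p^2$; subtracting $(\alpha/D)^2$ then yields the variance. The delicate points, and where I expect the real work to lie, are the careful second-moment bookkeeping of the insertion-and-eviction step (tracking separately the $N-1$ untouched slots and the freshly inserted block share, together with their correlations) and making the stationarity and independence claims rigorous, since a single share may be paid out across many blocks during its lifetime in the bag. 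Matching the resulting expression to the stated closed form $\frac{1}{D^2}(\alpha-\alpha^2)+\frac{\alpha}{ND}$ is the final and most error-prone step, so I would sanity-check the algebra against a small explicit case (for instance $N=1$) before trusting it.
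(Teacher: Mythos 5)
Your first two steps are correct, and for the fairness claim they give a valid proof by a route that is genuinely different from the one the paper uses inside this theorem: the paper never invokes the stationary distribution here, but instead attributes to the turn in which a share is found the total revenue $X$ that this share earns over its whole lifetime $Z$ in the bag, where $Z-1$ is geometric with mean $N-1$, and obtains $\mathbb{E}(X)=1/D$ from Wald's equation. Your steady-state argument (the bag composition is $\mathrm{Bin}(N,p)$ with $p=\alpha/(\alpha+\beta)$, so the expected payout per turn is $\frac{\alpha+\beta}{D}\cdot p=\frac{\alpha}{D}$) is in fact essentially the second fairness proof the paper itself gives after the steady-state theorem of Section~\ref{sec:steady-state}; the price is that it requires the bag to be in stationarity, whereas the paper's per-share accounting only requires the bag to be full.

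The variance half of your plan has a genuine gap, and it is not the algebraic bookkeeping you flagged at the end: you are computing the variance of a different random variable from the one the theorem refers to, and the two provably disagree. As the paper's proof makes explicit, the ``per-turn block reward'' whose variance is asserted is $R_1=B\cdot X$, where $B$ is the indicator that $m_1$ finds a share in that turn and $X$ is the random lifetime revenue of that share; all the randomness of the share's future payments is folded into the turn it is found, and $\mathrm{var}(R_1)=\alpha\,\mathbb{E}(X^2)-\alpha^2/D^2$. You instead compute the variance of the cash paid out in a single stationary turn, $\mathbb{I}(\text{pool block})\cdot s_1'/N$. These have the same mean but very different second moments: your formula is $\frac{\alpha+\beta}{D}\bigl(\frac{p(1-p)}{N}+p^2\bigr)-\frac{\alpha^2}{D^2}$, whose leading term $\frac{\alpha+\beta}{D}p^2=\frac{\alpha^2}{(\alpha+\beta)D}$ is of order $1/D$, whereas the claimed expression $\frac{1}{D^2}(\alpha-\alpha^2)+\frac{\alpha}{ND}$ is of order $1/D^2+1/(ND)$. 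For example, with $\gamma=0$, $\alpha=\beta=1/2$ and $N=2D$, your quantity is roughly $\frac{1}{4D}$ while the theorem's is $\frac{1}{2D^2}$. So the final ``matching'' step you propose cannot succeed, and the $N=1$ sanity check would only expose the mismatch without suggesting a repair inside your framework. To prove the stated formula you must switch to the paper's per-share accounting: write $X=\frac{1}{N}\sum_{i=1}^{Z}Y_i$ with $Y_i$ i.i.d.\ Bernoulli$(1/D)$ independent of $Z$, compute $\mathbb{E}(X^2\mid Z)$, take expectations using $\mathbb{E}(Z)=N$ and $\mathbb{E}(Z^2)=2N^2-N$, and then evaluate $\alpha\,\mathbb{E}(X^2)-\alpha^2/D^2$.
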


\begin{proof}
	
	Suppose that $\strategic$ finds a share and sends it to the pool manager. Let 
	$X$ be the random variable that specifies how much revenue that share 
	makes over its lifetime in the bag, $Z$. Since RPPLNS is a push-pay 
	protocol, this means that $Z-1$ is a geometric random variable with mean 
	$N-1$, since the first turn a share is sent to the protocol it is automatically 
	added to the bag and thus elligible for payment, whereas once the share is 
	kicked out of the bag it is not elligible for the subsequent turn's payment. At 
	each turn over the lifetime of the the share in the bag, let $Y_i$ be the 
	indicator random variable for whether any miner in the pool (i.e. $\strategic$ or 
	$\honest$) finds a full solution at the $i$-th turn (including the initial turn when 
	the share is found, as a share can be a full solution as well). This means that 
	the revenue of the share at the $i$-th turn is $Y_i/N$.
	$$
	X = \sum_{i=1}^Z (Y_i/N)
	$$
	Clearly each of the $Y_i$ is iid, hence we can use Wald's equation to get
	$$
	\mathbb{E}(X) = \mathbb{E}(Z) \mathbb{E}(Y_i) = (N)(1/ND) = 1/D
	$$
	With this in hand, we know that the expected revenue of an honest miner in 
	any given turn is the previous expression multiplied by the probability of 
	getting a share (including a share that is a full solution). Thus the expected 
	revenue of the $\strategic$ is 
	$$
	\mathbb{E}(\strategicR) = \alpha/D
	$$
	
	As for the second part of the theorem, we wish to compute the variance of 
	$X$. To do so, we compute $\mathbb{E}(X^2)$:
	$$
	\mathbb{E}(X^2 \mid Z) = \frac{1}{N^2} \mathbb{E} \left( \sum_{i=1}^Z Y_i^2  
	\mid Z\right)
	$$
	The inner sum can be expressed as 
	$$
	\sum_{i<j} \mathbb{E}(Y_i Y_j) + \sum_{i=1}^Z \mathbb{E}(Y_i)
	$$
	Since these are iid we get
	$$
	\binom{Z}{2} \frac{1}{D^2} + \frac{Z}{D}
	$$
	Thus
	$$
	\mathbb{E}(X^2 \mid Z) = \frac{Z^2 - Z}{2 N^2 D^2} + \frac{Z}{N^2 D}
	$$
	Now we can use the fact that $\mathbb{E}(Z) = N$ and $\mathbb{E}(Z^2) = 
	2N^2 - N$. 
	$$
	\mathbb{E}(X^2) = \frac{(2N^2 - N) + N}{2N^2 D^2} + \frac{N}{N^2 D}
	= \frac{1}{D^2} + \frac{1}{ND}
	$$
	Finally, this means that the second moment of the revenue is 
	$$
	\mathbb{E}(\strategicR^2) = \frac{\alpha}{D^2} + \frac{\alpha}{ND}
	$$
	which in turn tells us the variance is
	$$
	var(\strategicR) = \mathbb{E}(\strategicR^2) - \mathbb{E}(\strategicR)^2 = 
	\frac{1}{D^2}(\alpha - \alpha^2) + \frac{\alpha}{ND}
	$$
\end{proof}

In deterministic PPLNS, block reward variance can be computed in an identical 
fashion, and it is $\frac{\alpha}{2D^2} + \frac{\alpha}{ND} - \frac{\alpha^2}{D^2} - 
\frac{\alpha}{2ND^2}$. Typically, pools have $N = 2D$, in which case the PPLNS 
variance becomes $ \frac{1}{D^2} (\alpha - \alpha^2) - \frac{\alpha}{4D^3}$. For 
this difficulty setting, RPPLNS block reward variance becomes 
$\frac{1}{D^2}(\alpha - \alpha^2) + \frac{\alpha}{2D^2}$. Though this is more than 
with standard PPLNS, this still vanishes at the same asymptotic rate of 
$O(1/N^2)$ when $N = \Theta(D)$.

\subsection{State Space Reduction}
\label{sec:state-space-reduction}

In this section, we prove that RPPLNS generally results in an exponential 
reduction in the state space $\mathcal{M}$ needs to implement the protocol.

\begin{restatable}{theorem}{statespace}
	\label{thm:state-reduction}
	Suppose that $\mathcal{M}$ is a mining pool with $m$ miners. In order to 
	implement PPLNS with parameter $N$, $\mathcal{M}$ needs at least $m^N$ 
	states, whereas $\mathcal{M}$ can implement RPPLNS with parameter $N$ 
	using at most $N\frac{(N+m-2)^{m-1}}{(m-1)!}$ states.  
\end{restatable}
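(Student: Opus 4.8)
The statement splits into two essentially independent claims that call for opposite techniques: the PPLNS lower bound is a \emph{distinguishability} argument about what a faithful state machine must remember, while the RPPLNS upper bound is a direct \emph{counting} argument about the declared state space. The plan is to handle them separately.

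For the PPLNS lower bound, the key structural observation is that although the payment function $P$ reads only the multiset of share-owners currently in the queue, the transition function $T$ is order-sensitive: each incoming share evicts the \emph{oldest} queue entry. Consequently $\mathcal{M}$ cannot collapse two queues that induce different future eviction (and hence payment) behaviour. I would therefore (i) note that every one of the $m^N$ ordered full queues in $[m]^N$ is reachable from $s_0$ by feeding its entries as $N$ successive ${\sf share}$ messages, and (ii) argue that distinct full queues are behaviourally distinct. For (ii), given two queues that first differ at some position, I would push a run of fresh shares from one fixed miner so as to evict the identical tail, reducing to two queues whose \emph{multisets} now differ, and then send a single ${\sf block}$ message; the resulting $P$-vectors disagree on the affected miner. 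Since behaviourally distinct configurations cannot share an internal state, $\mathcal{M}$ needs at least $m^N$ states, and the argument makes precise the intuition that PPLNS is forced to memorise the full ordering.

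For the RPPLNS upper bound, the ordering information that blew up PPLNS has been discarded by design: a state is literally a vector of per-miner counts $s \in \mathbb{N}^m$ with $\sum_i s_i \le N$, i.e.\ a multiset of size at most $N$ over $m$ miner types. I would count these by stars-and-bars: the number of count vectors with $\sum_i s_i = j$ is $\binom{j+m-1}{m-1}$, and there are at most $N+1$ admissible bag sizes $j \in \{0,\dots,N\}$. Bounding each level by its maximum, $\binom{N+m-1}{m-1} = \frac{1}{(m-1)!}\prod_{i=1}^{m-1}(N+i)$, and coarsening the product to a single power of $(N+m-1)$ (or $N+m-2$, after the standard estimate) gives a bound of the stated closed form $N\frac{(N+m-2)^{m-1}}{(m-1)!}$; the exact constant is a routine estimate, and the salient point is that the count is polynomial of degree $m-1$ in $N$ rather than exponential.

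The two halves are lopsided in difficulty. The RPPLNS side is essentially mechanical once one recognises that the state is a bounded multiset, so the only care needed is in the final binomial-to-closed-form estimate. The real work, and the main obstacle, is the distinguishability step in the PPLNS bound: because $P$ exposes only multisets, two queues with the \emph{same} multiset but different order reveal their difference only through future evictions, so one must choose exactly how many fresh shares to push before the terminal block in order to surface the discrepancy at the intended position. Getting this continuation right for every pair of distinct configurations — rather than merely pointing to $m^N$ syntactically distinct queues — is what turns the count into a genuine lower bound on the memory any correct implementation must carry.
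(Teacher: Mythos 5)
Your RPPLNS upper bound is correct and is essentially the paper's own argument: the paper counts the non-negative integer points of the scaled simplices $k\Delta^{m-1}$, $k\le N$ (citing an external bound of $\frac{(N+m-2)^{m-1}}{(m-1)!}$ for the full-bag level) and applies exactly the same level-by-level union bound that costs the factor $N$; your stars-and-bars count $\binom{j+m-1}{m-1}$ per bag size $j$ is the same computation in different clothing, and the final binomial-to-power coarsening is no worse than what the paper itself does.

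The PPLNS half is where you genuinely diverge from the paper, and it is also where your argument breaks. The paper does \emph{not} prove a distinguishability lower bound: it simply observes that the canonical implementation must store a sliding window in $[m]^N$ and declares the claim ``clear from the definition.'' You instead try to show that every correct implementation needs $m^N$ states, via reachability plus pairwise behavioral separation. Step (ii) of that plan is false under the paper's own definitions. When a block arrives, the pool first transitions to $s' = T(s,{\sf block}_i)$ and \emph{then} pays according to $P(s')$; that is, the block's push evicts the oldest queue entry \emph{before} any payment is computed. Consequently the oldest entry of a full queue can never influence any payment nor any future state: two full queues that agree in their newest $N-1$ entries and differ only in the oldest one are behaviorally equivalent, and no continuation --- in particular no choice of ``how many fresh shares to push before the terminal block'' --- can separate them. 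Your technique therefore yields a lower bound of $m^{N-1}$, not $m^N$: distinct $(N-1)$-prefixes \emph{are} separable (if $\ell\le N-1$ is the first discrepancy, push $N-1-\ell$ fresh shares from a fixed miner and then a block; the post-push multisets, hence the payment vectors, differ), but the $m$-fold ambiguity in the oldest slot collapses. To recover the literal constant $m^N$ of the statement you must read ``needs $m^N$ states'' the way the paper implicitly does --- as counting configurations of the canonical queue implementation --- rather than as a memory lower bound over all implementations; under your stronger (and more honest) reading, the theorem's constant is itself off by a factor of $m$, though the exponential-versus-polynomial moral is unaffected.
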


\begin{proof}
	
	The first part of the theorem is clear from the definition of PPLNS. We focus 
	on the state that PPLNS needs once the queue is full after the initial $N$ 
	turns. Essentially $\mathcal{M}$ needs to keep track of a sliding window of 
	the ownership of the shares sent to it. Since the messages sent to 
	$\mathcal{M}$ can only come from $[m]$, there are thus $m^N$ many such 
	sequences.    
	
	As for the second part of the claim, we begin by focusing on the states used 
	once the bag is full after $N$ turns. We notice that the set of all possible 
	partitions of $N$ bag shares into $m$ owners is in one-to-one 
	correspondence with non-negative integral points of the $m-1$ simplex 
	scaled by a factor of $N$: $N \Delta^{m-1} = \{ x \in \mathbb{R}^{m-1} \ | \ 
	\forall i \  x_i \geq 0, \text{ and } \|x\|_1 \leq N\}$. In \cite{yau2006upper}, this is 
	shown to be upper bounded by $\frac{(N+m-2)^{m-1}}{(m-1)!}$. For the first 
	$N$ transitions of the protocol, we notice that the set of all possible partitions 
	of $k < N$ bag shares into $m$ owners is in one-to-one correspondence with 
	non-negative integral points of $k\Delta^{m-1} \subsetneq N\Delta^{m-1}$, 
	hence by a union bound we get $|\bigcup_{k=1}^N k\Delta^{m-1}| \leq 
	N|N\Delta^{m-1}| \leq N\frac{(N+m-2)^{m-1}}{(m-1)!}$ as desired.
	
\end{proof}

To put Theorem \ref{thm:state-reduction} into perspective, it is often the case 
that $m < < N$, so if we let $m = O(1)$ and consider the cardinality of the state 
space as a function of $N$, we get that for PPLNS this is exponential in $N$: 
$O(m^N)$ and for RPPLNS this is only polynomial in $N$: $O(N^m)$.  Similarly, 
notice in addition to this, how in PPLNS, if $\mathcal{M}$ wishes to 
communicate to a specific miner the state of his shares, this requires $N$ bits, 
as the miner needs to know the position of all his shares in the queue. On the 
other hand, RPPLNS only needs to communicate $\log N$ bits, due to the fact 
that it suffices to tell miners how many shares they have in the bag, as there is 
no sequentiality in the bag.

\subsection{Robustness to Pool-hopping}
\label{sec:pool-hop-proof}

In this section we show that if a miner is given the choice between mining with 
two different RPPLNS pools, then in expectation he will always earn the same 
block reward, irrespective of the initial state of his shares in each pool and how 
he may choose to partition his mining between said pools. In order to prove 
this, suppose that there is a single miner $\strategic$ of hash power $\alpha$, 
and two RPPLNS pools, $\mathcal{M}_i$ for $i \in \{1,2\}$. Furthermore, 
suppose that each $\mathcal{M}_i$ has bag size $N_i$, hash power $\beta_i$ 
and difficulty $D_i$, so that on average, $\mathcal{M}_i$ mines one block every 
$D_i$ shares.

Previously we studied fairness of a single mining pool, and hence we could 
model share/block mining as a turn-based process, where each turn a share is 
found, and ownership is dictated by agent hash rates. Having a turn-based 
approach with multiple pools is more fickle however, since pools may have 
different share difficulties, and hence the expected duration of such a turn in the 
continuous-time mining process may be different. For this reason, in this section 
we instead consider a continuous-time mining process. 

As is standard, we assume that share/block mining follows a Poisson process. 
We further assume that time units are normalised so that the expected time it 
takes for a block to be mined by the entire mining ecosystem is one time unit. 
Given this assumption, it follows that if a $\eta \in [0,1]$ proportion of the global 
mining hash power is dedicated to $\mathcal{M}_i$ for $T$ time units, then in 
expectation $\eta D_i T$ shares are found in $\mathcal{M}_i$ for $i \in \{1,2\}$. 
Given these observations, we are in a position to prove that RPPLNS is 
pool-hop-proof.

\begin{theorem}
	\label{thm:RPPLNS-pool-hop}
	Suppose that $\strategic$ has $A_i$ shares in $\mathcal{M}_i$ for $i \in 
	\{1,2\}$ at time $t = 0$. Furthermore, suppose that $\{I_1,...,I_k\}$ is an 
	arbitrary finite disjoint collection of closed intervals of $[0,T]$, such that $I_\ell 
	= [x_\ell,y_\ell]$, where $T$ is arbitrary as well. Let $T_2 = \cup_{i=1}^k I_i$ 
	and $T_1 = [0,T] \setminus T_2$ and suppose that $\strategic$ mines for 
	$\mathcal{M}_i$ for each $T_i$, $i \in \{1,2\}$, respectively. Then the expected 
	lifetime block reward of $\strategic$ for residual shares $A_1,A_2$ and 
	shares/blocks mined in $[0,T]$ is $\frac{A_1}{D_1} + \frac{A_2}{D_2} + \alpha 
	T$.   
\end{theorem}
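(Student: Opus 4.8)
The plan is to decompose $\strategic$'s total expected reward by linearity of expectation into the contributions of the individual shares he owns --- the $A_1+A_2$ residual shares present at $t=0$, together with every share he mines during $[0,T]$ --- and to reduce everything to one per-share fact: any share resident in the bag of $\mathcal{M}_i$ earns $1/D_i$ in expectation over the remainder of its lifetime, a quantity insensitive to the bag's composition, to the rate at which fresh shares arrive, and hence to $\strategic$'s hopping decisions. Granting this, the theorem is essentially immediate. The residual shares contribute $A_1/D_1$ and $A_2/D_2$. A unit of time spent mining $\mathcal{M}_i$ produces shares at rate $\alpha D_i$, each worth $1/D_i$, so the expected reward rate while mining $\mathcal{M}_i$ is $\alpha D_i\cdot\frac{1}{D_i}=\alpha$ --- the same for both pools. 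This cancellation of $D_i$ is exactly why the allocation of time between the pools is irrelevant.

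First I would establish the per-share fact, which I expect to be the crux of the argument. The reward accrued by a fixed share of $\mathcal{M}_i$ depends only on the embedded discrete sequence of pool-miner share arrivals it witnesses while in the bag, not on the continuous-time clock: on each such arrival the newcomer is a full block with probability $1/D_i$ (in which case the share is paid $1/N_i$), and the share is itself evicted with probability $1/N_i$, independently across arrivals and of their timing. Taking the bags to be full, as in the steady-state regime underlying \Cref{thm:RPPLNS-fair-variance}, a resident share survives $N_i$ arrivals in expectation and collects $N_i\cdot\frac{1}{N_i}\cdot\frac{1}{D_i}=\frac{1}{D_i}$. Because the eviction law is geometric and hence memoryless, a residual share present at $t=0$ has the same remaining-lifetime distribution as a fresh share and so also earns $1/D_i$. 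The delicate point --- which is what genuinely rules out any gain from pool hopping --- is that this value depends on neither the current bag contents nor the arrival rate, so it is unchanged when $\strategic$ leaves $\mathcal{M}_i$ and its share-arrival rate drops from $(\alpha+\beta_i)D_i$ to $\beta_i D_i$; I would use $\beta_i>0$ to ensure every share is eventually resolved.

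Finally I would count the freshly mined shares using the stated normalisation, by which a proportion $\eta$ of global hash power devoted to $\mathcal{M}_i$ for time $T$ finds $\eta D_i T$ shares in expectation; with hash power $\alpha$, $\strategic$ therefore produces $\alpha D_1|T_1|$ shares in $\mathcal{M}_1$ and $\alpha D_2|T_2|$ in $\mathcal{M}_2$ in expectation. Assembling the three groups by linearity gives
\[
\mathbb{E}(\strategicR)=\frac{A_1}{D_1}+\frac{A_2}{D_2}+(\alpha D_1|T_1|)\frac{1}{D_1}+(\alpha D_2|T_2|)\frac{1}{D_2}=\frac{A_1}{D_1}+\frac{A_2}{D_2}+\alpha(|T_1|+|T_2|),
\]
and since $T_1,T_2$ partition $[0,T]$ we have $|T_1|+|T_2|=T$, yielding $\frac{A_1}{D_1}+\frac{A_2}{D_2}+\alpha T$ independently of the particular intervals $\{I_1,\dots,I_k\}$.
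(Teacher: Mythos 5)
Your proposal is correct and follows essentially the same route as the paper's proof: the same decomposition into residual shares (each worth $\frac{A_i}{D_i}$ in total) plus freshly mined shares (counted as $\alpha D_i |T_i|$ shares, each worth $\frac{1}{D_i}$ by the lifetime argument of \cref{thm:RPPLNS-fair-variance}), assembled by linearity. The only difference is presentational --- you argue the residual-share contribution per share via memorylessness of the geometric eviction time, where the paper sums a geometric series over the expected number of surviving residual shares, and you make explicit the (correct, and in the paper implicit) observation that the per-share value $\frac{1}{D_i}$ is insensitive to the arrival rate, which is precisely why hopping is irrelevant.
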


\begin{proof}
	
	We first derive the expected payoff $\strategic$ obtains from $A_i$ residual 
	shares in $\mathcal{M}_i$. Since the pool manager will remove one share out 
	of queue randomly whenever a new share arrives, the expected number of 
	these $A_i$ residual shares will shrink exponentially as new shares come in. 
	After $n$ new shares, the expected number of residual shares will be $A_i 
	\left(\frac{N_i-1}{N_i}\right)^{n-1}$. Furthermore, every time a share is found, it 
	has a $\frac{1}{D_i}$ chance of being a full solution, in which case every share 
	$\strategic$ has in the bag pays $\frac{1}{N_i}$ block reward. This means that 
	the total expected lifetime payoff from these $A_i$ shares can be written as,
	\begin{eqnarray*}
		&&\frac{1}{D_i} \frac{A_i}{N_i} \left(1 + \frac{N_i-1}{N_i} + 
		\left(\frac{N_i-1}{N_i}\right)^{2} + \dots \right) \\
		&&=\frac{1}{D_i} \frac{A_i}{N_i} \frac{1}{1-\frac{N_i-1}{N_i}} = \frac{A_i}{D_i}.
	\end{eqnarray*}
	
	It only remains to compute the expected lifetime block reward $\strategic$ 
	obtains from shares/blocks mined during $[0,T]$. In what follows we let $|T_2| 
	= \sum_{i=1}^k|y_i - x_i|$ and $|T_1| = T - |T_2|$ denote the amount of time 
	$\strategic$ mines in $\mathcal{M}_i$ during $[0,T]$. We note that during 
	$T_1$, a total of $\alpha + \beta_1$ hash power is being contributed to 
	$\mathcal{M}_1$. This means that $\mathcal{M}_1$ mines $D_1 |T_1| (\alpha + 
	\beta_1)$ shares in expectation during during this time. Furthermore, every 
	share has a $\frac{\alpha}{\alpha + \beta}$ probability of belonging to 
	$\strategic$, therefore $\strategic$ finds $D_1 \alpha |T_1|$ shares for 
	$\mathcal{M}_1$ in $T_1$. Furthermore, as we saw in Theorem 
	\ref{thm:RPPLNS-fair-variance}, each of these shares earns $\frac{1}{D_1}$ 
	block reward over its lifetime in expectation, therefore the total lifetime block 
	rewards from shares found for $\mathcal{M}_1$ in $[0,T]$ is $\alpha |T_1|$ in 
	expectation. In an identical fashion we can show that the lifetime rewards for 
	shares found for $\mathcal{M}_2$ in $[0,T]$ is $\alpha |T_2|$. Putting 
	everything together, this means that the total expected lifetime reward for the 
	residual shares $A_1$, $A_2$ and newly found shares in $[0,T]$ is 
	$\frac{A_1}{D} + \frac{A_2}{D} + \alpha T$ as desired.
	
\end{proof}

The fact that the expected lifetime reward is an expression that is independent of 
the partition, $T_1$, precisely implies that RPPLNS is robust to pool-hopping. 
Furthermore, it is straightforward to generalise Theorem 
\ref{thm:RPPLNS-pool-hop} to encompass a choice of hopping between an 
arbitrary number of pools, yielding the same robustness to pool-hopping in this 
setting.

\subsection{Steady State of Honest Pool for RPPLNS}
\label{sec:steady-state}

In this section we show that the number of shares $\strategic$ has in the bag of 
size $N$ can be modelled as an ergodic Markov chain. We explicitly derive the 
steady state distribution and use it to compute the expected number of shares 
the miner has in the bag in the honest steady state. 

\subsubsection{The Markov Chain}

Let us suppose that miner $\strategic$ currently has $i \in \{0,...,N\}$ shares in 
the bag and is honest. With probability $\alpha$, $\strategic$ finds a block/share 
and publishes it to the pool operator. Conditional to this, with probability 
$\frac{i}{N}$, the number of shares stays the same, $i \rightarrow i$, and with 
probability $\frac{N-i}{N}$, the number of shares increases by one, $i \rightarrow 
i+1$. On the other hand, $\honest$ finds a block/share and publishes it with 
probability $\beta$. Conditional to this, with probability $\frac{i}{N}$, the number 
of shares goes down by one, $i \rightarrow i-1$, and with probability 
$\frac{N-i}{N}$, the number of shares stays the same. Clearly, this induces an 
ergodic Markov chain on the state space $S = \{0,...,N\}$. Suppose that $\pi = 
(\pi_i)_{i=1}^N$, is the unique honest steady state distribution over $S$. 

\begin{restatable}{theorem}{steadystate}
	For the steady state distribution $\pi$ of miner shares in the RPPLNS bag: 
	$$
	\pi_k = \frac{\binom{N}{k}\left(\frac{\alpha}{\beta} \right)^k}{\left( 1 + 
	\frac{\alpha}{\beta} \right)^N}
	\text{, for }k = 0,...,N.
	$$
	Furthermore, the expected number of shares in the bag under $\pi$ is 
	$$ 
	N \left(\frac{\alpha}{\alpha + \beta} \right).
	$$
	
\end{restatable}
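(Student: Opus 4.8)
The plan is to exploit the fact that the induced chain on $S=\{0,\ldots,N\}$ is a \emph{birth-death chain}: from a state $i$ the only level-changing moves are $i\to i+1$, which occurs when $\strategic$ finds a share (probability $\alpha$) and the displaced share is one of the $N-i$ belonging to the opponents (probability $\frac{N-i}{N}$), and $i\to i-1$, which occurs when $\honest$ finds a share (probability $\beta$) and the displaced share is one of $\strategic$'s $i$ shares (probability $\frac{i}{N}$). All remaining probability — the same-state outcomes of $\strategic$ and $\honest$, together with the entire self-loop contributed by $\nonpool$ at rate $\gamma$ — stays on the diagonal. Since a birth-death chain has no transitions that skip a level, it is automatically reversible, so rather than solving the full global balance system I would derive $\pi$ from the detailed balance equations, in which the diagonal terms (and hence $\gamma$) never appear; this is precisely why the final formula depends only on $\alpha$ and $\beta$.

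First I would write the detailed balance identity $\pi_i\,\alpha\frac{N-i}{N}=\pi_{i+1}\,\beta\frac{i+1}{N}$ for each $0\le i<N$, which rearranges to
$$
\frac{\pi_{i+1}}{\pi_i}=\frac{\alpha}{\beta}\cdot\frac{N-i}{i+1}.
$$
Next I would telescope this recurrence starting from $\pi_0$, obtaining $\pi_k=\pi_0\left(\frac{\alpha}{\beta}\right)^k\prod_{i=0}^{k-1}\frac{N-i}{i+1}$, and observe that the product collapses to $\binom{N}{k}$. To pin down the normalizing constant, I would impose $\sum_{k=0}^{N}\pi_k=1$ and invoke the binomial theorem, which gives $\sum_{k=0}^{N}\binom{N}{k}(\alpha/\beta)^k=(1+\alpha/\beta)^N$ and hence $\pi_0=(1+\alpha/\beta)^{-N}$. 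Substituting back yields exactly the claimed closed form for $\pi_k$; uniqueness of $\pi$ needs no separate argument since ergodicity has already been asserted.

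For the second claim I would recognize the stationary law as $\mathrm{Binomial}(N,p)$ with $p=\frac{\alpha}{\alpha+\beta}$, since multiplying numerator and denominator through by $\beta^N/(\alpha+\beta)^N$ rewrites the formula as $\pi_k=\binom{N}{k}\left(\frac{\alpha}{\alpha+\beta}\right)^k\left(\frac{\beta}{\alpha+\beta}\right)^{N-k}$. The mean of this binomial is $Np=N\frac{\alpha}{\alpha+\beta}$, which is the stated expected number of shares.

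I do not expect a genuinely hard step here; the only things to get right are the \emph{directions} of the two displacement moves — it is one of $\strategic$'s own shares that must be evicted to decrease the count, and one of the opponents' shares to increase it — and the observation that the $\nonpool$ self-loop, being state-independent and confined to the diagonal, cannot enter the detailed balance ratio. A purely verificational shortcut would be to substitute the proposed closed form directly into the detailed balance equations and check equality, sidestepping even the telescoping; I would, however, prefer the constructive derivation above since it makes transparent where the binomial structure comes from.
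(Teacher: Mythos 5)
Your proposal is correct, and its first half is essentially the paper's own argument: the detailed balance identity $\pi_i\,\alpha\frac{N-i}{N}=\pi_{i+1}\,\beta\frac{i+1}{N}$ that you write down is exactly what the paper obtains via cut-sets between $\{0,\ldots,i\}$ and $\{i+1,\ldots,N\}$ (for a birth-death chain the two justifications coincide), and the telescoping plus binomial-theorem normalisation is identical. Where you genuinely diverge is the computation of the mean. The paper evaluates $\sum_{k=0}^N k\,\pi_k$ by the calculus identity $x\frac{d}{dx}(1+x)^N=\sum_k \binom{N}{k}k\,x^k$ and then simplifies. You instead rewrite
$$
\pi_k=\binom{N}{k}\left(\frac{\alpha}{\alpha+\beta}\right)^k\left(\frac{\beta}{\alpha+\beta}\right)^{N-k},
$$
recognise the stationary law as $\mathrm{Binomial}\bigl(N,\tfrac{\alpha}{\alpha+\beta}\bigr)$, and quote its mean. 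Your route is cleaner and buys strictly more: it identifies the full distribution in a standard family (so, e.g., the steady-state variance $N\alpha\beta/(\alpha+\beta)^2$ and concentration bounds come for free), and it makes the fairness interpretation transparent, since each bag slot is independently owned by $\strategic$ with probability equal to his share of the pool's hash power. The paper's derivative trick, by contrast, is self-contained and needs no appeal to facts about named distributions, but yields only the first moment. Both are valid; no gaps.
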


\begin{proof}
	From the topology of the chain, we can make a cut-set between $S_i = 
	\{0,...,i\}$ and $S_i^c = \{i+1,...,N\}$ for $i = 0,...,N-1$. From steady state 
	conditions it follows that $\pi$ must satisfy $\pi_i P_{i\rightarrow i+1} = 
	\pi_{i+1} P_{i+1 \rightarrow i} $ for all such cut-sets. It follows that 
	$P_{i\rightarrow i+1} = \alpha \frac{N-i}{N}$, and $P_{i+1 \rightarrow i} = \beta 
	\frac{i+1}{N}$. As a consequence, we obtain $\frac{\pi_{i+1}}{\pi_i} = 
	\frac{\alpha}{\beta} \left( \frac{N-i}{i+1} \right) = f(i)$. Using this expression we 
	know that $\pi_{i+1} = f(i) \pi_i$ for all $i = 0,...,N-1$, consequently $\pi_k = 
	\left(\prod_{j=0}^{k-1}f(j)\right) \pi_0 = \pi_0 \binom{N}{k} ( \frac{\alpha}{\beta} 
	)^k$ for all $k$. The final step in obtaining the steady state is normalising the 
	sum of all terms, which corresponds to $\sum_{k=0}^N \pi_k = 
	\pi_0\sum_{k=0}^N \binom{N}{k} (\frac{\alpha}{\beta})^k$. Using the 
	well-known expansion for $(1+x)^n$, we get that this is $\pi_0 \left( 1 + 
	\frac{\alpha}{\beta} \right)^N$. As a consequence, it follows that for $\pi$ to 
	be a steady state, we need $\pi_0 = \left( 1 + \frac{\alpha}{\beta} \right)^{-N}$ 
	and consequently:
	$$
	\pi_k = \frac{\binom{N}{k}\left(\frac{\alpha}{\beta} \right)^k}{\left( 1 + 
	\frac{\alpha}{\beta} \right)^N}
	\text{, }k = 0,...,N
	$$
	
	Now  we wish to compute the expected number of shares in the bag in the 
	steady state by using our expression $\pi$. To do so, let us first define $f(x) = 
	(1 + x)^n$ for $n \in \mathbb{N}$ and $x \in \mathbb{R}$. We know that $f(x) = 
	\sum_{k=0}^n \binom{n}{k}x^k$. Therefore, $nx(1+x)^{n-1} = x \frac{d}{dx} f(x) 
	= \sum_{k=0}^n \binom{n}{k} k x^k$. We use this formula to compute the 
	expectation over $\pi$, $\mathbb{E}_{\pi} = \sum_{k=0}^N k \pi_k$. 
	
	\begin{equation*}
		\begin{split}
			\mathbb{E}_{\pi} = \sum_{k=0}^N k \pi_k &= \left(1 + \frac{\alpha}{\beta} 
			\right)^{-N} \sum_{k=0}^N  \binom{N}{k} k \left( \frac{\alpha}{\beta} 
			\right)^k \\
			& = \left( 1 + \frac{\alpha}{\beta} \right)^{-N} N \left( \frac{\alpha}{\beta} 
			\right) \left( 1 + \frac{\alpha}{\beta} \right)^{N-1} \\
			& = N \left(\frac{\alpha}{\alpha + \beta} \right)
		\end{split}
	\end{equation*}
	
\end{proof}

As a first consequence, note that this also constitutes a proof that RPPLNS is 
fair if everyone is honest. The reason for this is that in each turn, we suppose 
that one of either $\strategic,\honest$ or $\nonpool$ find a share with probability 
$\alpha, \beta, \gamma$ respectively, and each share has a further probability of 
$\frac{1}{D}$ of being a full solution. Thus the probability that a single turn ends 
up paying agents from the pool is precisely $\frac{\alpha + \beta}{D}$. As a 
consequence, the expected payment to $\strategic$ in the steady state is $\left( 
\mathbb{E}_\pi \right) \left( \frac{\alpha + \beta}{ND} \right) = \frac{\alpha}{D}$, 
which is precisely the expected per-turn payment of $\strategic$ mining solo.

\subsection{When is Honest Mining a Dominant Strategy}
\label{sec:recursion}

We recall that we are in the setting of a single pool miner being strategic. In 
other words we have $\strategic$, $\honest$ and $\nonpool$ of hash powers 
$\alpha$, $\beta$ and $\gamma$ respectively. We wish to find conditions such 
that $\strategic$ is honest (i.e. publishes shares and blocks to $\mathcal{M}_R$ 
immediately). Given this setting, at any given moment, we describe the state of 
$\strategic$ within the system with a tuple $(\ell, s, b) \in [N]^2 \times \{0,1\}$. 
$\ell$ represents the shares $\strategic$ has in the bag of the pool, $s$ 
represents the private shares of $\strategic$, and $b$ represents whether 
$\strategic$ has a private block or not. Note that $b \le 1$ is without loss of 
generality, as the extra blocks mined are invalidated once one of them is 
released. Requiring $s \le N$ is not without loss of generality, since technically a 
selfish miner could hoard an infinite amount of shares. However, it is reasonable 
to assume that $\alpha < 0.5$ (to prevent scenarios where double spend attacks 
are possible in the first place). Hence it is exceedingly unlikely that $s$ will reach 
values higher than $N$, given a typical difficulty such as $D = N/2$. 
Furthermore, as we will soon see, any time an agent other then $\strategic$ finds 
a block, private shares are invalidated, hence further making it difficult to hoard a 
large amount of private shares. The benefit of upper bounding $s$ and $b$ is 
significant, as the state space becomes finite.

\subsubsection{Recurrence Relation}

Let $g_k(\ell,s,b)$ be the maximum revenue a strategic miner can obtain when 
acting optimally starting at $g(\ell,s,b)$, after exactly $k$ shares in total have 
been mined by his pool. The parameter $k$ is necessary for the recursion to 
prevent it from being open ended.

\begin{equation}\label{eq:recurrence}
	g_k(\ell,s,b) = max 
	\begin{cases}
		\text{A} &\quad (\strategic\text{ waits and } s < N) \\
		\text{B} &\quad (\strategic\text{ waits and } s = N) \\
		\text{C} &\quad (\strategic\text{ publishes a share)}\\
		\text{D} &\quad (\strategic\text{ publishes a block)}\\
	\end{cases}
\end{equation}
We present the expressions for $A, B, C$ and $D$ in the following sections.

\subsubsection{$\strategic$ Waits and $s < N$}

When $\strategic$ waits and $s < N$, there are 6 immediate outcomes:
\begin{itemize}
	\item $\strategic$ gets a block. Call this event ``$a$'' which happens with 
	probability $\alpha \left(\frac{1}{D} \right)$. The resulting state is $(\ell,s,1)$.
	\item $\strategic$ gets a share. Call this event ``$b$'' which happens with 
	probability $\alpha \left(\frac{D-1}{D} \right)$. The resulting state is 
	$(\ell,s+1,b)$. 
	\item $\honest$ gets a block. Call this event ``$c$'' which happens with 
	probability $\beta \left(\frac{1}{D} \right)$. Since $\honest$ is honest, he 
	publishes this block and the private shares and blocks of $\strategic$ are 
	rendered obsolete. This means that with probability $\frac{\ell}{N}$ a share of 
	$\strategic$ is kicked out of the bag resulting in state $(\ell-1,0,0)$ and 
	$\strategic$ getting paid $\frac{\ell-1}{N}$. With probability $\frac{N-\ell}{N}$ a 
	share of $\honest$ is kicked out of the bag resulting in state $(\ell,0,0)$ and 
	$\strategic$ getting paid $\frac{\ell}{N}$.
	\item $\honest$ gets a share. Call this event ``$d$'' which happens with 
	probability $\beta \left(\frac{D-1}{D} \right)$. Since $\honest$ is honest, he 
	publishes this share. This means that with probability $\frac{\ell}{N}$ a share 
	of $\strategic$ is kicked out of the bag and the resulting state is $(\ell-1,s,b)$, 
	and with probability $\frac{N-\ell}{N}$ a share of $\honest$ is kicked out of the 
	bag and the resulting state is $(\ell,s,b)$.
	\item $\nonpool$ gets a block. Call this event ``$e$'' which happens with 
	probability $\gamma \left(\frac{1}{D} \right)$. Since $\nonpool$ is honest, he 
	publishes this block and the private shares and blocks of $\strategic$ are 
	rendered obsolete. This means that the resulting state is $(\ell,0,0)$.
	\item $\nonpool$ gets a share. Call this event ``$f$'' which happens with 
	probability $\gamma \left(\frac{D-1}{D} \right)$. In this case the state remains at 
	$(\ell,s,b)$
\end{itemize}

We can now get an expression for $A$ in the recursion for $g_k$:
\begin{align*}
	A = \quad&
	\alpha \left(\frac{1}{D} \right) g_{k-1}(\ell,s,1) 
	+ \alpha \left(\frac{D-1}{D} \right) g_{k-1}(\ell,s+1,b)\\ 
	&\quad+ \beta \left(\frac{1}{D} \right) \left( \frac{\ell}{N} g_{k-1}(\ell-1,0,0) + 
	\frac{N-\ell}{N} g_{k-1}(\ell,0,0) + \frac{\ell}{N} \right)\\
	&\quad+\beta \left(\frac{D-1}{D} \right) \left( \frac{\ell}{N} g_{k-1}(\ell-1,s,b) + 
	\frac{N-\ell}{N} g_{k-1}(\ell,s,b) \right)\\ 
	&\quad+ \gamma \left(\frac{1}{D} \right) g_{k-1}(\ell,0,0) 
	+ \gamma \left(\frac{D-1}{D} \right) g_{k-1}(\ell,s,b)
\end{align*}

\subsubsection{$\strategic$ Waits and $s = N$}
This case is almost identical to the above. The only difference is in event $b$: 
when $\strategic$ gets a share. Since we don't want to consider an infinite state 
space, we will assume that upon obtaining $N$ private shares, $\strategic$ will 
always publish subsequent mined shares. Since $N$ is usually quite large, the 
probability of even obtaining this state is small. 

\begin{itemize}
	\item $\strategic$ gets a share. Call this event ``$b$'' which happens with 
	probability $\alpha \left(\frac{D-1}{D} \right)$. $\strategic$ will publish this 
	share. Hence the resulting state is $(\ell,s,b) = (\ell,N,b)$.  
\end{itemize}

With identical reasoning to the above, we get the following expression for $B$:
\begin{align*}
	A = \quad&
	\alpha \left(\frac{1}{D} \right) g_{k-1}(\ell,N,1) 
	+ \alpha \left(\frac{D-1}{D} \right) g_{k-1}(\ell,N,b)\\ 
	&\quad+ \beta \left(\frac{1}{D} \right) \left( \frac{\ell}{N} g_{k-1}(\ell-1,0,0) + 
	\frac{N-\ell}{N} g_{k-1}(\ell,0,0) + \frac{\ell}{N} \right)\\
	&\quad+\beta \left(\frac{D-1}{D} \right) \left( \frac{\ell}{N} g_{k-1}(\ell-1,N,b) + 
	\frac{N-\ell}{N} g_{k-1}(\ell,N,b) \right)\\ 
	&\quad+ \gamma \left(\frac{1}{D} \right) g_{k-1}(\ell,0,0) + \gamma 
	\left(\frac{D-1}{D} \right) g_{k-1}(\ell,N,b)
\end{align*}

\subsubsection{$\strategic$ Publishes a Share}

If the current state of the system is $(\ell,s,b)$, then by publishing a share, there 
is a $\frac{\ell}{N}$ probability that a share belonging to $\strategic$ is kicked 
out of the bag, resulting in the state $(\ell,s-1,b)$. There is a $\frac{N-\ell}{N}$ 
probability that a share belonging to $\honest$ is kicked out of the bag, 
resulting in the state $(\ell+1,s-1,b)$. Note that a share is not actually mined, 
therefore the recurrence does not involve $g_{k-1}$. Putting this together, we get 
the following value of $C$:

$$
C = \frac{\ell}{N} g_{k}(\ell,s-1,b) + \frac{N-\ell}{N} g_{k}(\ell+1,s,b)
$$

\subsubsection{$\strategic$ Publishes a Block}

If the current state of the system is $(\ell,s,b)$, then by publishing a block, there 
is a $\frac{\ell}{N}$ probability that a share belonging to $\strategic$ is kicked 
out of the bag, resulting in the state $(\ell,s-1,b)$ and $\strategic$ earning 
$\frac{\ell}{N}$. There is a $\frac{N-\ell}{N}$ probability that a share belonging to 
$\honest$ is kicked out of the bag, resulting in the state $(\ell+1,s-1,b)$ and 
$\strategic$ earning $\frac{\min (N,\ell+1 )}{N}$. Putting this together, we get the 
following value of $D$:

$$
D= \frac{\ell}{N}\left(\frac{\ell}{N} + g_{k}(\ell,0,0)\right) + 
\frac{N-\ell}{N}\left(\frac{\min(N,\ell+1)}{N}+ g_{k}(\ell+1,0,0)\right)
$$

\subsubsection{Checking for Equilibrium Conditions}
Observe that since the state space is finite the value obtained per `step' $k$ is at 
most $1$, the following limit is well defined for all $(l,s,b)$:
$$
\phi(l,s,b) = \lim_{k\rightarrow \infty} \frac{g_k(l,s,b)}{k}.
$$
This $\phi(l,s,b)$ can be thought of as the potential of state $(l,s,b)$. Comparing 
$\phi$ values of different states, we can deduce where the miner obtains higher 
payoffs.To tell for which $\alpha, \beta, N$, and $D$ honest mining is a 
dominant strategy, we need to check the following conditions:
\begin{equation}\label{ineq:dsic_share}
	\phi(\ell+1,0,0) \ge \phi(\ell,1,0) \text{ when } (\ell) \in [N].
\end{equation}
\begin{equation}\label{ineq:dsic_block}
	\frac{N-\ell}{N} \cdot \phi(\ell+1,0,0) +
	\frac{\ell}{N}\cdot\phi(\ell,0,0) \ge \phi(\ell,0,1) \text{ when } (\ell) \in [N],
\end{equation}
where \cref{ineq:dsic_share} and \cref{ineq:dsic_block} show that releasing 
shares and blocks respectively is at least as good as hoarding them.

\subsubsection{Computational Setup}
Unfortunately we cannot solve exactly for $\phi$, but we can approximate it well 
enough given high enough $k$ to observe if strategic mining is an option. We 
will use $g_k$ as defined in \cref{eq:recurrence} with one difference: if state 
$(l,N,b)$ is reached the miner is rewarded $1$ and jumps to state $(N,0,0)$. This 
incentivizes hoarding blocks, since hoarding enough of them results in an instant 
maximum payoff. The reason for this is to strengthen the experimental results: if 
honest mining is viable in this setting, it is more likely that it is actually the best 
option.

Having computed $g_k$ for large enough $k$, we can then check 
\Cref{ineq:dsic_share} as it is. For \Cref{ineq:dsic_block} we need the following 
small adjustment:
\begin{align*}
	&\frac{N-\ell}{N}\left(\frac{\ell+1}{N} + g_k(\ell+1,0,0)\right)+
	\frac{\ell}{N}\left(\frac{\ell}{N} + g_k(\ell,0,0)\right) \\ 
	&> g_k(\ell,0,1) \text{ when } (\ell) \in [N].
\end{align*}

This is because as $k$ goes to infinity, the immediate rewards $(\ell+1)/N$ and 
$\ell/N$ vanish compared to the potential of the three states involved (as $\phi$ 
is defined by diving with $k$). Since we are working with finite $k$, we need to 
plug them back in.

\subsubsection{Results}

To understand the behaviour of the miners, we explicitly compute the truncated 
recursion in Section \ref{sec:recursion} for $N=1000, D=500$ and $k=150$. Each 
of the following graphs shows the best action for $\strategic$ given an 
\emph{initial} fraction of shares in the bag, which we call $F$. For $F \le 0.35$, 
we have:
\begin{figure*}
	\centering
	\scalebox{0.33}{\input{f=0.05.pgf}}
	\scalebox{0.33}{\input{f=0.10.pgf}}
	\scalebox{0.33}{\input{f=0.15.pgf}}
	\scalebox{0.33}{\input{f=0.20.pgf}}
	\scalebox{0.33}{\input{f=0.25.pgf}}
	\scalebox{0.33}{\input{f=0.30.pgf}}
\end{figure*}

In this regime of $F$ values, the strategic miner begins by controlling but a few 
shares of the bag, and his best option is often to \emph{hoard} a block. This is 
because they expect to be able to add a couple more shares to the bag before 
someone else manages to mine another block and invalidate their private block. 
In Section \ref{sec:strategic-hoarding} we formally demonstrate that for common 
difficulty settings of $D = \Theta(N)$ (typically $D = N/2$), if $\alpha = 
\Omega(1/N)$, and a pool miner has no shares in the bag, then hoarding a block 
for a single turn dominates behaving honestly. This behaviour is not unique to 
RPPLNS though, as we also demonstrate in the same section that for similar 
hash rates, if a pool miner is facing an empty queue in PPLNS, they will hoard a 
private block with hopes of finding a share in subsequent turns. 

What is most interesting though, is that if we recall our derivations of the steady 
state of bag shares from Section \ref{sec:steady-state}, strategic block hoarding 
only occurs at hash rates and $F$ values such that $F$ is in fact much less than 
the expected number of bag shares in the steady state. This suggests that if all 
miners behave honestly, a single pool miner is more likely to find himself at a 
state where mining honestly is a dominant strategy. 

On the other hand, for initial share distributions, $0.35 \le F \le 0.70$, our 
recursions suggest that honest mining is the best option throughout. We have 
omitted graphs of these cases since they simply paint the simplex red entirely. 
Finally, when $F \geq 0.70$, as in the low initial share regime, we see that 
strategic behaviour arises once more, though this time in the form of hoarding 
shares rather than blocks. For example, it is not difficult to see that if $F = 1$, 
then in both PPLNS and RPPLNS hoarding a share dominates publishing it 
immediately, as doing so has no effect on the state of the pool.

Once again, share hoarding becomes a better strategy at initial share 
distributions that are far from the expected share distribution of the steady state 
of bag shares from Section \ref{sec:steady-state}. This once again suggests that 
states where miners act strategically are more rare than those where a miner is 
honest.

\begin{figure*}
	\centering
	\scalebox{0.33}{\input{f=0.75.pgf}}
	\scalebox{0.33}{\input{f=0.80.pgf}}
	\scalebox{0.33}{\input{f=0.85.pgf}}
	\scalebox{0.33}{\input{f=0.90.pgf}}
	\scalebox{0.33}{\input{f=0.95.pgf}}
	\scalebox{0.33}{\input{f=1.00.pgf}}
\end{figure*}

\section{Strategic Block Hoarding}
\label{sec:strategic-hoarding}

As seen in our empirical results, there are parameter settings in which we see 
strategic pool mining in RPPLNS. In this section we theoretically justify why this 
is the case for block hoarding in particular. 

\subsection{PPLNS}
We assume that there is a single strategic pool miner, $\strategic$ with hash 
power $\alpha$. This miner operates within a PPLNS mining pool with hash 
power $\alpha + \beta$, where the honest miners in the pool can be considered 
as a single honest pool miner $\honest$. Finally, we let $\nonpool$ represent all 
other honest miners in a system with collective hash power $\gamma = 1 - 
\alpha - \beta$.

Let us suppose that $\strategic$ has no shares in the pool queue and that he 
currently holds a valid block he could send to the pool operator. We compare 
two mining strategies over a two-element decision window. If $\strategic$ is 
honest, he will immediately publish this block, wait one more turn to see who 
receives a share/block, and subsequently act honestly (publish a new 
share/block immediately if found). On the other hand, we consider a one-time 
strategic deviation by $\strategic$ as follows: $\strategic$ hoards the current 
block he has, and waits one more turn to see what occurs. If he receives a 
share, he publishes the share before his held block. In all other scenarios he acts 
honestly. We call the honest strategy $H$ and the strategic deviation $S$. 

\begin{theorem}
	\label{thm:PPLNS-strategic}
	If $\alpha > \frac{N+D-1}{(D-1)^2}$, then $S$ is a strictly dominates $H$. 
\end{theorem}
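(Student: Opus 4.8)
The plan is to compute the difference in expected reward $\mathbb{E}[S]-\mathbb{E}[H]$ over the two-turn window by conditioning on what happens during the single turn $\strategic$ waits, and to show it equals $\frac{1}{ND^2}\bigl(\alpha(D-1)^2-(N+D-1)\bigr)$, which is positive precisely when $\alpha>\frac{N+D-1}{(D-1)^2}$. Throughout I assume the queue is full, so each share it contains is worth $1/N$ of any block cashed while the share remains. Since after the window $\strategic$ mines honestly under both strategies, and the stream of freshly found shares is identically distributed going forward — and, with a full queue, each such share's value is independent of the rest of the queue's composition — that future stream cancels in the difference. Hence $\mathbb{E}[S]-\mathbb{E}[H]$ depends only on the rewards cashed during the two turns and the lifetime value of whatever shares $\strategic$ leaves sitting in the queue at the end of the window.

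First I would enumerate the six mutually exclusive turn-one events: $\strategic$ finds a block (prob.\ $\alpha/D$) or a share ($\alpha(D-1)/D$); $\honest$ finds a block ($\beta/D$) or a share ($\beta(D-1)/D$); and $\nonpool$ finds a block ($\gamma/D$) or a share ($\gamma(D-1)/D$). For each I would record, under both $H$ and $S$: the $1/N$-payments cashed, the set of queue positions $\strategic$ ends up occupying, and — crucially — whether $\strategic$'s held block survives. The held block is orphaned exactly when another agent publishes a block first, i.e.\ in the $\honest$-block and $\nonpool$-block events; there $S$ cashes nothing and ends with an empty queue, whereas $H$ (having published at turn zero) still collects. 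When $\strategic$ finds a share, $S$ inserts it ahead of the block and cashes $2/N$ against $H$'s $1/N$; when $\strategic$ finds his own block, $S$ can cash only one of the two competing blocks while $H$ cashes both.

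Next I would compute the lifetime value $W_p$ of a share currently at queue position $p$ under honest play. Examining one turn and using that a pool push (prob.\ $\alpha+\beta$) advances the share, that only pool blocks pay, and that a share at position $N$ is evicted without payment on the next push, gives the recursion $W_p=\frac{1}{ND}+W_{p+1}$ with $W_N=0$, hence $W_p=\frac{N-p}{ND}$. With the $W_p$ in hand, the end-of-window queue configurations from the event table translate into future values (positions $\{1,2\}$ are worth $W_1+W_2$, an empty queue is worth $0$, and so on), and summing immediate plus future contributions weighted by the event probabilities yields $\mathbb{E}[S]-\mathbb{E}[H]$. The final simplification uses $\gamma=1-\alpha-\beta$ to collapse the $\beta$ and $\gamma$ terms into a single factor $(N-1)(1-\alpha)$, after which the numerator reduces cleanly to $\alpha(D-1)^2-(N+D-1)$.

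The main obstacle is getting the bookkeeping exactly right rather than any hard estimate. The delicate points are the push--pay timing (a share is paid on a block-turn only if it survives the shift, which forces $W_N=0$ rather than $W_N=\tfrac{1}{ND}$), the precise queue positions after the window — these differ between $H$ and $S$ in the $\honest$-share event, where $S$ leaves its share one slot fresher and so slightly ahead — and the orphaning logic that makes the $\honest$-block and $\nonpool$-block events the dominant cost of $S$. Once these are pinned down the computation is routine, and the region where hoarding strictly helps is exactly the stated threshold.
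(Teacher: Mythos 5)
Your proposal is correct and follows essentially the same route as the paper's proof: the same six-case conditioning on the waiting turn, the same push-then-pay bookkeeping (your $W_p=\frac{N-p}{ND}$ with $W_N=0$ is exactly the paper's lifetime-value accounting for queue shares), and the same collapse of the $\beta,\gamma$ terms via $\gamma=1-\alpha-\beta$. Indeed, the paper's intermediate expression for $R_S-R_H$ simplifies to precisely your closed form $\frac{1}{ND^2}\bigl(\alpha(D-1)^2-(N+D-1)\bigr)$, which yields the stated threshold.
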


\begin{proof}
	How $H$ and $S$ perform depends on precisely one of 6 cases:
	\begin{itemize}
		\item $\strategic$ finds a block in the next turn. This case occurs with 
		probability $\alpha \frac{1}{D}$.
		\item $\strategic$ finds a share in the next turn. This case occurs with 
		probability $\alpha \frac{D-1}{D}$.
		\item $\honest$ finds a block in the next turn. This case occurs with 
		probability $\beta \frac{1}{D}$.
		\item $\honest$ finds a share in the next turn. This case occurs with 
		probability $\beta \frac{D-1}{D}$.
		\item $\nonpool$ finds a block in the next turn. This case occurs with 
		probability $\gamma \frac{1}{D}$.
		\item $\nonpool$ finds a share in the next turn. This case occurs with 
		probability $\gamma \frac{D-1}{D}$.
	\end{itemize}
	First we focus on the expected payoffs $H$ receives in each of these cases:
	\begin{itemize}
		\item Here $\strategic$ publishes two blocks back to back. This results in 
		an immediate reward of $\frac{3}{N}$. The older block is still eligible for 
		payment for $N-2$ turns and the latter for $N-1$ turns, hence their 
		expected future payment is $\frac{N-2}{ND} + \frac{N-1}{ND}$. Overall we 
		denote this expected revenue by $H_1 = \frac{3}{N} + \frac{N-2}{ND} + 
		\frac{N-1}{ND}$.
		\item Here $\strategic$ publishes a block and then a share. This results in 
		an immediate reward of $\frac{1}{N}$. As before, the older share is alive for 
		$N-2$ more turns and the latter for $N-1$, hence their expected future 
		payment is $\frac{N-2}{ND} + \frac{N-1}{ND}$. Overall we denote this 
		expected revenue by $H_2 = \frac{1}{N} + \frac{N-2}{ND} + \frac{N-1}{ND}$.
		\item Here $\strategic$ publishes a block and $\honest$ subsequently 
		publishes a block. This results in an immediate reward of $\frac{2}{N}$. The 
		first block is alive for another $N-2$ turns, hence its expected future 
		payment is $\frac{N-2}{ND}$. Overall we denote this expected revenue by 
		$H_3 = \frac{2}{N} + \frac{N-2}{ND}$.
		\item Here $\strategic$ publishes a block and $\honest$ subsequently 
		publishes a share. This results in an immediate reward of $\frac{1}{N}$. The 
		first block is alive for another $N-2$ turns, hence its expected future 
		payment is $\frac{N-2}{ND}$.  Overall we denote this expected revenue by 
		$H_4 = \frac{1}{N} + \frac{N-2}{ND}$.
		\item Here $\strategic$ publishes a block and $\nonpool$ subsequently 
		finds a block. The latter event has no bearings on the pool, hence this is 
		equivalent to $\strategic$ publishing a single block, which gives an 
		immediate reward of $\frac{1}{N}$ and since this block is alive for another 
		$N-1$ turns, an expected future reward of $\frac{N-1}{ND}$. Overall we 
		denote this expected revenue by $H_5 = \frac{1}{N} + \frac{N-1}{ND}$.
		\item Here $\strategic$ publishes a block and $\nonpool$ subsequently 
		finds a share. The latter event has no bearings on the pool, hence this is 
		equivalent to $\strategic$ publishing a single block, which gives an 
		immediate reward of $\frac{1}{N}$ and since this block is alive for another 
		$N-1$ turns, an expected future reward of $\frac{N-1}{ND}$. Overall we 
		denote this expected revenue by $H_6 = \frac{1}{N} + \frac{N-1}{ND}$.
		
	\end{itemize}
	
	Given our previous expressions, we can write the expected revenue 
	$\strategic$ obtains from using $H$ by 
	\begin{align*}
		R_H &= H_1 \left( \frac{\alpha}{D} \right) + H_2 \left( \frac{\alpha(D-1)}{D} 
		\right)
		+ H_3 \left( \frac{\beta}{D} \right)\\
		&\quad+ H_4 \left( \frac{\beta(D-1)}{D} \right)
		+ H_5 \left( \frac{\gamma}{D} \right) + H_6 \left( \frac{\gamma(D-1)}{D} \right)
	\end{align*}
	
	We proceed to compute revenues for the strategic deviation $S$:
	
	\begin{itemize}
		\item Here $\strategic$ hoards a block and then receives a new block. This 
		forcibly invalidates a single block hence, hence $\strategic$ only gets an 
		immediate reward of $\frac{1}{N}$ and subsequently this block lives 
		another $N-1$ turns to obtain an expected lifetime revenue of 
		$\frac{N-1}{ND}$. Overall we denote this expected revenue by $S_1 = 
		\frac{1}{N} + \frac{N-1}{ND}$.
		\item Here $\strategic$ hoards a block and then recieves a share, 
		publishing the share first and then the block. This results in an immediate 
		reward of $\frac{2}{N}$. Subsequently the latter share is alive for $N-2$ 
		turns and the block for $N-1$, hence the expected future payoff of these is 
		$\frac{N-2}{ND} + \frac{N-1}{ND}$. Overall we denote this expected revenue 
		by $S_2 = \frac{2}{N} + \frac{N-2}{ND} + \frac{N-1}{ND}$. 
		\item Here $\strategic$ hoards a block and $\honest$ publishes a block. 
		This invalidates the secret block and $\strategic$ gets no revenue. We 
		denote this by $S_3 = 0$.
		\item Here $\strategic$ hoards a block and $\honest$ publishes a share. 
		$\strategic$ only gets an immediate reward of $\frac{1}{N}$ and 
		subsequently this block lives another $N-1$ turns to obtain an expected 
		lifetime revenue of $\frac{N-1}{ND}$. Overall we denote this expected 
		revenue by $S_4 = \frac{1}{N} + \frac{N-1}{ND}$. 
		\item Here $\strategic$ hoards a block and $\nonpool$ publishes a block. 
		This invalidates the secret block and $\strategic$ gets no revenue. We 
		denote this by $S_5 = 0$.
		\item Here $\strategic$ hoards a block and $\honest$ publishes a share. 
		$\strategic$ only gets an immediate reward of $\frac{1}{N}$ and 
		subsequently this block lives another $N-1$ turns to obtain an expected 
		lifetime revenue of $\frac{N-1}{ND}$. Overall we denote this expected 
		revenue by $S_6 = \frac{1}{N} + \frac{N-1}{ND}$.
		
	\end{itemize}
	
	Given our previous expressions, we can write the expected revenue 
	$\strategic$ obtains from using $S$ by 
	\begin{align*}
		R_S = &S_1 \left( \frac{\alpha}{D} \right) + S_2 \left( \frac{\alpha(D-1)}{D} 
		\right) + S_3 \left( \frac{\beta}{D} \right)\\
		&+ S_4 \left( \frac{\beta(D-1)}{D} \right) + S_5 \left( \frac{\gamma}{D} \right) + 
		S_6 \left( \frac{\gamma(D-1)}{D} \right) 
	\end{align*}
	With these expressions in hand, we are interested in parameter values such 
	that $R_S > R_H$. 
	\begin{align*}
		R_S - R_H = &-\frac{\alpha}{D} \left(\frac{2}{N} + \frac{N-2}{DN} \right) + 
		\frac{\alpha(D-1)}{DN} - \frac{\beta}{D} \left( \frac{2}{N} + \frac{N-2}{ND} 
		\right)\\ 
		&+ \frac{\beta(D-1)}{D} \left( \frac{1}{ND} \right) - \frac{\gamma}{D} \left( 
		\frac{1}{N} + \frac{N-1}{ND} \right)
	\end{align*}
	We wish to show when this latter expression is strictly greater than 0. The 
	final term is $$- \frac{\gamma}{D} \left( \frac{1}{N} + \frac{N-1}{ND} \right) = - 
	\frac{\gamma}{D} \left( \frac{2}{N} + \frac{N-2}{ND} \right) + \frac{\gamma}{D} 
	\left( \frac{1}{N} - \frac{1}{ND} \right).$$ This means that we can join all terms 
	that have $\frac{2}{N} + \frac{N-2}{ND}$ and take advantage of the fact that 
	$\alpha + \beta + \gamma = 1$. This gives us the following expression:
	\begin{align*}
		R_S - R_H &= \frac{\alpha(D-1)}{ND} + \frac{\beta(D-1)}{ND^2} + 
		\frac{\gamma}{D} \left( \frac{1}{N} - \frac{1}{ND} \right)\\
		&\quad- \frac{1}{D} \left( \frac{2}{N} + \frac{N-2}{ND} \right) > 0
	\end{align*}
	We can now multiply both sides of the inequality by $\frac{ND}{D-1}$ to 
	isolate $\alpha$ and get
	$$
	\alpha + \frac{\beta}{D} + \frac{\gamma}{D-1} - \frac{\gamma}{D(D-1)} - 
	\frac{2}{D-1} - \frac{N-2}{D(D-1)} > 0
	$$
	Subsequent simplification gives
	\begin{align*}
		\alpha &> 
		\frac{2D + N - 2 + \gamma - \beta(D-1) - D\gamma}{D(D-1)} \\
		& = \frac{2D + N - 2 -(D-1)(\beta + \gamma)}{D(D-1)} \\
		& = \frac{2D + N - 2 -(D-1)(1 - \alpha)}{D(D-1)} \\
		& = \frac{N + D - 1}{D(D-1)} + \frac{\alpha}{D} \\
	\end{align*}
	Subtracting $\frac{\alpha}{D}$ from both sides we obtain
	\begin{align*}
		\alpha \left( \frac{D-1}{D} \right) &> \frac{N + D - 1}{D(D-1)} \\
		\alpha  &> \frac{N + D - 1}{(D-1)^2}\qedhere\\
	\end{align*}
\end{proof}
In most cases $N = 2D$, which gives a lower bound of $\alpha > \frac{3D - 
1}{(D-1)^2}$. $D$ is also usually quite large, hence this shows that in almost all 
cases miners will hoard blocks in the case they get lucky and find a block with 
an empty queue!

\subsection{RPPLNS}
Since RPPLNS is memoryless, we can extend the above analysis to the case 
where $\strategic$ miner has $k$ shares in the bag and holds a private block. 
We want to study the expected revenue from being honest for a single turn vs. 
waiting for the next turn in hopes of finding another share to publish before the 
withheld block.

\subsubsection{Important Terms}

In RPPLNS, if a specific share surives being pushed, it will be eligible for 
payment another $N-1$ pushes in expectation due to the memory-less property 
of their survival (it is a geometric random variable). If everyone is honest, each of 
these payment opportunities has a $\frac{1}{D}$ probability of actually paying a 
$\frac{1}{N}$ amount, hence shares that survive being pushed give $\strategic$ 
an expected revenue of $\frac{N-1}{ND}$. 

To simplify the expressions, we let $f_i^B(k,N)$ and $f_i^S(k,N)$ represent the 
expected utility $\strategic$ makes when miner $i \in \{1,2\}$ pushes a block or a 
share into the bag respectively:

\begin{equation*}
	\begin{split}
		f_1^B(k,N) &  = \frac{k}{N} \left( \frac{k}{N} + k \left(\frac{N-1}{ND} \right) 
		\right) + \frac{N-k}{N} \left( \frac{k+1}{N} + (k+1) \left(\frac{N-1}{ND} \right) 
		\right)  \\
		f_1^S(k,N) &  = \frac{k}{N} \left( k \left(\frac{N-1}{ND} \right) \right) + 
		\frac{N-k}{N} \left( (k+1) \left(\frac{N-1}{ND} \right) \right) \\
		f_2^B(k,N) &  = \frac{k}{N} \left( \frac{k-1}{N} + (k-1) \left(\frac{N-1}{ND} 
		\right) \right) + \frac{N-k}{N} \left( \frac{k}{N} + k \left(\frac{N-1}{ND} \right) 
		\right) \\
		f_2^S(k,N) &  = \frac{k}{N} \left( (k-1) \left(\frac{N-1}{ND} \right) \right) + 
		\frac{N-k}{N} \left( k \left(\frac{N-1}{ND} \right) \right) 
	\end{split}
\end{equation*}

With this in hand, we consider a two-turn scenario as with PPLNS. In one 
$\strategic$ has a block in hand, and is honest with this current block and with 
what happens the following term. This strategy is $H$ for honest. In the second 
case, $\strategic$ hoards this block in hand with hopes of mining a share the 
following turn and publishing the share before the block. After this minor 
strategic deviation for a turn though, $\strategic$ returns to being honest. This 
strategy is referred to by $S$. First we write the expected utility for each action 
for $H$
\begin{equation*}
	\begin{split}
		H_1 &  = \frac{k}{N} \left( \frac{k}{N} + f_1^B(k,N) \right) + \frac{N-k}{N} \left( 
		\frac{k+1}{N} + f_1^B(k+1,N) \right)  \\
		H_2 &  = \frac{k}{N} \left( \frac{k}{N} + f_1^S(k,N) \right) + \frac{N-k}{N} \left( 
		\frac{k+1}{N} + f_1^S(k+1,N) \right)  \\
		H_3 &  =  \frac{k}{N} \left( \frac{k}{N} + f_1^B(k,N) \right) + \frac{N-k}{N} \left( 
		\frac{k+1}{N} + f_2^B(k+1,N) \right) \\
		H_4 &  = \frac{k}{N} \left( \frac{k}{N} + f_2^S(k,N) \right) + \frac{N-k}{N} \left( 
		\frac{k+1}{N} + f_2^S(k+1,N) \right)  \\
		H_5 &  = f_1^B(k,N)  \\
		H_6 &  = f_1^B(k,N)  \\
	\end{split}
\end{equation*}

Next we write the expected utility for each action for $S$
\begin{align*} 
	S_1 &  = f_1^B(k,N)\notag  \\
	S_2 &  = \frac{k}{N} \left( f_1^B(k,N) \right) + \frac{N-k}{N} \left( f_1^B(k+1,N) 
	\right)\notag  \\
	S_3 &  =  0 \\
	S_4 &  = \frac{k}{N} \left( f_1^B(k-1,N) \right) + \frac{N-k}{N} \left( f_1^B(k,N) 
	\right)  \notag\\
	S_5 &  = 0 \notag \\
	S_6 &  = f_1^B(k,N) \notag 
\end{align*}

We remember the probability of each state
\begin{equation*} 
	\begin{split}
		p_1 = \frac{\alpha}{D}  &\qquad p_2 = \frac{\alpha(D-1)}{D}\\
		p_3 = \frac{\beta}{D}  &\qquad p_4 = \frac{\beta(D-1)}{D}\\
		p_5 = \frac{\gamma}{D} &\qquad p_6 = \frac{\gamma(D-1)}{D} 
	\end{split}
\end{equation*}

With this in hand, we know that the following condition implies that $S$ 
dominates $H$ when $\strategic$ has a block in hand:
$$
R_S = \sum_{i=1}^6 (S_i) p_i >  \sum_{i=1}^6 (H_i) p_i = R_H
$$

\subsection*{The Case where $k = 0$}
Just as in PPLNS, we focus on the scenario where the bag is empty and 
strategic pool miners find a block. In terms of the equations above, this amounts 
to the case where $k = 0$.

\begin{theorem}
	If $k = 0$, and $\alpha,\beta$ are such that 
	$$
	\alpha > \frac{1}{(D-1)^2}\left( \frac{ND}{N-1} + N - \beta(N-2) \right) = \Theta 
	\left( \frac{N+D}{D^2} \right)
	$$
	then $S$ strictly dominates $H$.
\end{theorem}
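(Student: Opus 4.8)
The plan is to mirror the two-window argument used for PPLNS in \Cref{thm:PPLNS-strategic}, but now driven by the memoryless revenue bookkeeping encapsulated in the functions $f_i^B$ and $f_i^S$. Concretely, I would specialise every quantity to $k=0$ and then reduce the claim to the single scalar inequality $R_S - R_H = \sum_{i=1}^6 (S_i - H_i)\,p_i > 0$, which by construction is exactly the condition ``$S$ dominates $H$'' recorded before the theorem.

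The first step is to substitute $k=0$ into $f_1^B, f_1^S, f_2^B, f_2^S$. The boundary is benign because every term carrying the prefactor $k/N$ (the event that a $\strategic$-owned share is evicted) vanishes when $k=0$; in particular $f_2^B(0,N) = f_2^S(0,N) = 0$, while $f_1^S(0,N) = \frac{N-1}{ND}$ and $f_1^B(0,N) = \frac{1}{N} + \frac{N-1}{ND}$. This also makes the formal reference to $f_1^B(-1,N)$ in $S_4$ harmless, since it is multiplied by $k/N = 0$. I would then assemble the $H_i$ and $S_i$ at $k=0$, which now involve only these base values together with the four $k=1$ values $f_1^B(1,N), f_1^S(1,N), f_2^B(1,N), f_2^S(1,N)$, and record the useful cancellations up front: $H_6 = S_6 = f_1^B(0,N)$, so the $p_6$ term drops out entirely, and each surviving difference collapses (e.g.\ $S_5 - H_5 = -f_1^B(0,N)$ and $S_1 - H_1 = f_1^B(0,N) - \tfrac{1}{N} - f_1^B(1,N)$).

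The third step is to form $R_S - R_H = \sum_{i=1}^5 (S_i - H_i)\,p_i$, insert the explicit $p_i$, and simplify. Exactly as in the PPLNS case I would eliminate $\gamma$ via $\gamma = 1-\alpha-\beta$, collect the common immediate-reward ($1/N$) and continuation-reward ($\frac{N-1}{ND}$) blocks, and then isolate $\alpha$ by clearing the $\frac{D-1}{D}$ factor and the $\frac{N-1}{ND}$ factor; this clearing is precisely what produces the reciprocal $\frac{ND}{N-1}$ appearing in the stated bound. A final arithmetic check confirms that $\frac{1}{(D-1)^2}\bigl(\frac{ND}{N-1} + N - \beta(N-2)\bigr) = \Theta\bigl(\frac{N+D}{D^2}\bigr)$ in the regime $N = \Theta(D)$.

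The main obstacle is purely the algebraic bookkeeping of the third step: unlike PPLNS, where each $H_i$ and $S_i$ was a short linear expression, here every term is a quadratic-flavoured combination of an immediate $1/N$ payment and a discounted $\frac{N-1}{ND}$ continuation, so the cancellations are easy to mis-track. The cleanest safeguard is to evaluate the $k=1$ values of the four $f$-functions once and reuse them, and to exploit the structural fact that among the six events only the non-pool-share event contributes exactly zero to $S_i - H_i$ (since a hoarded block is then still safely publishable), so the nontrivial work is confined to five short differences weighted by $p_1,\dots,p_5$.
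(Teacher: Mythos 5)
Your proposal follows essentially the same route as the paper's own proof: specialise the $H_i$, $S_i$ to $k=0$ via the $f$-function values, compute the per-state differences $\Delta_i = S_i - H_i$ (including the observations that $\Delta_6 = 0$ and that the $k/N$ prefactors kill the boundary terms), then form $\sum_i p_i \Delta_i > 0$, clear denominators, substitute $\gamma = 1-\alpha-\beta$, and isolate $\alpha$ to obtain the stated threshold. The paper does exactly this (it records the six $\Delta_i$ explicitly and then asserts the resulting inequality), so your plan is correct and matches the intended argument.
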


\begin{proof}
	
	We begin by computing the per-state surplus that $S$ gives to $H$, $\Delta_i 
	= S_i - H_i$, which gives us the following:
	
	\begin{align*}
		\Delta_1 &  = - \left( \frac{1}{N} + \frac{N-1}{N^2} + \frac{(N-1)^2}{N^2 D}  
		\right)  \\
		\Delta_2 &  = \frac{N-1}{N^2} \\
		\Delta_3 &  = - \left( \frac{1}{N} + \frac{N-1}{N^2} + \frac{N-1}{N^2 D} \right)   
		\\
		\Delta_4 &  = \frac{N-1}{N^2 D} \\
		\Delta_5 &  = - \left( \frac{1}{N} + \frac{N-1}{ND} \right) 
		\\
		\Delta_6 &  = 0 
	\end{align*}
	
	Given these expressions, we are interested in the scenarios where 
	$\sum_{i=1}^6 p_i \Delta_i > 0$. Given the fact that $N,D \geq 0$, this is 
	equivalent to $ND^2\sum_{i=1}^6 p_i \Delta_i > 0$. This leads to the following 
	condition on $\alpha$ for hoarding to dominate revealing a block.
	$$
	\alpha > \frac{1}{(D-1)^2}\left( \frac{ND}{N-1} + N - \beta(N-2) \right) = \Theta 
	\left( \frac{N+D}{D^2} \right)
	$$
\end{proof}

In summary, we have shown that as in PPLNS, if $\alpha = \Omega \left( 
\frac{N+D}{D^2} \right)$, then hoarding a block when up against an empty bag is 
strictly better than publishing it immediately. As in with PPLNS, it is often the 
case that $D = \Theta (N)$, hence this tells us that in this scenario, if $\alpha = 
\Omega  \left( \frac{1}{N} \right)$, then miners are strategic by hoarding blocks. 
This is indeed what our empirical recursion results show.

\section{Conclusion and Future Work}	
\label{sec:future-work}

In this paper we presented RPPLNS,  a novel twist on the already popular 
``Pay-per-last-$N$-shares'' (PPLNS) mining pool scheme used by the majority of 
the Bitcoin network. By suitably randomising PPLNS, we are able to maintain its 
strengths (fairness, variance reduction, pool-hop-proof-ness) while proving 
robustness guarantees for honest mining and reducing the underlying memory 
constraints of the protocol. There are several possible directions for future 
research on RPPLNS. In the following paragraphs, we touch upon the ones we 
consider the most fruitful.

\subsection{Interpolating between RPPLNS and PPLNS: Queue-bag 
protocols}
At its core, PPLNS and RPPLNS are quite similar. It is not difficult to see that for 
a given value of $N$, one can interpolate between PPLNS and RPPLNS by 
considering a pool protocol $\mathcal{M}_{Q,N}$ that maintains a queue of 
length $Q \in [N]$, and when a share is kicked out of the queue, it is put in a bag 
of size $N-Q$. 
Clearly $\mathcal{M}_{1,N}$ is RPPLNS and $\mathcal{M}_{N,N}$ is PPLNS. 
One can show that such family of queue-bag protocols share similar fairness, 
variance reduction and pool-hop-proof properties of PPLNS and RPPLNS. For 
such protocols, it would be interesting to study their incentive compatibility and 
see whether there is an optimal choice of $Q$ to be made in terms of variance 
reduction, incentive compatibility and memory usage minimisation.

\subsection{Possible Benefits of Informational Fairness}

We have already mentioned the fact that the reduced cardinality of the state 
space of RPPLNS vs. PPLNS can give rise to space usage gains for an 
implementation of the pool protocol. An interesting consequence of this fact is 
that since the state space of RPPLNS is  smaller, it is also succinctly 
describable, and hence easy to communicate to all pool miners. It would be 
interesting to know that if a pool is operating a PPLNS mining protocol and 
there are two strategic agents $m_1$ and $m_2$ within the pool, where $m_1$ 
has full knowledge of the state of the queue $s \in S$ at any given moment of 
time and $m_2$ only has partial information, say some statistic, $s'$ over $s$ 
(this could be the queue considered as a bag for example, with order of 
elements in the queue forgotten), whether this gives $m_1$ any undue advantage 
over $m_2$. If this were the case, there would be a strong further justification for 
RPPLNS, as a way of putting all agents on an equal informational playing 
ground.

\subsection{Stronger Incentive Guarantees}

It would be great to rigorously understand the recurrence relation governing the 
gain of an optimal strategic pool miner in RPPLNS. This would give us stronger 
results regarding incentive compatibility of pool miners and could glean insights 
into unforeseen strategic considerations.


\bibliography{biblio}


\end{document}